\newcommand{\myket}[1]{|#1\rangle}
\newcommand{\mybra}[1]{\langle#1|}
\newcommand{\Tr}{\text{Tr}}
\newcommand{\Fid}{\mbox{\textit{Fid}}}
\newtheorem{thm}{Theorem}
\newtheorem{cor}{Corollary}
\newtheorem{lem}{Lemma}
\newtheorem{defin}{Definition}
\newtheorem{exe}{Example}
\newtheorem{sce}{Scenario}
\newtheorem{rem}{Remark}
\newtheorem{que}{Open question}
\newtheorem{red}{Reduction}
\newtheorem{prop}{Proposition}
\newtheorem{conj}{Conjecture}
\newenvironment{qlt}[1][Quantum learning task]{\begin{trivlist}
\item[\hskip \labelsep {\bfseries #1}]}{\end{trivlist}}
\DeclareMathOperator*{\argmax}{arg\,max}
\title{Quantum classification}
\author{
S\'{e}bastien Gambs\\
D\'{e}partement d'informatique et de recherche op\'{e}rationnelle\\
Universit\'{e} de Montr\'{e}al\\
C.P. 6128, Succ.\ Centre-Ville, Montr\'{e}al (Qu\'ebec), H3C 3J7 \textsc{Canada}\\
\texttt{gambsseb\textnormal{@}iro.umontreal.ca} \\
%http://www.iro.umontreal.ca/$\sim$gambsseb\\
}
\begin{document}

\maketitle

\begin{abstract}
Quantum classification is defined as the task of predicting the
associated class of an unknown quantum state drawn from an ensemble
of pure states given a finite number of copies of this state. By
recasting the state discrimination problem within the framework of
Machine Learning (ML), we can use the notion of learning reduction
coming from classical ML to solve different variants of the
classification task, such as the weighted binary and the multiclass
versions.
\end{abstract}

\section{Introduction}
\label{intro}

Suppose that you are given an unknown quantum state drawn from an
ensemble of possible pure states where each state is labeled after
the class from which it originated. How well can you predict the
class of this unknown state? This general question is often referred
to in the literature as (quantum) \emph{state
discrimination}\footnote{Other common names include \emph{state
distinguishability} and \emph{state
identification}.}~\cite{surveystatediscrimination} and has been
studied at least as far back as the seminal work of Helstrom in the
seventies in the field of \emph{quantum detection and estimation
theory}~\cite{helstrom:detection76}. Of course, the answer will
depend on parameters such as the structure and your knowledge of the
ensemble of pure states, the dimension of the Hilbert space in which
the quantum states live and the number of copies of the unknown
state you received.

In this paper, we take a Machine Learning (ML) view of the problem
by recasting it as a learning task called \emph{quantum
classification}. Our main goal by doing so is to bring new ideas and
insights from ML to help solve this task and some of its variants.
Other motivations include the characterization of these learning
tasks in terms of the amount of information needed to complete them
(measured for instance by the number of copies of the quantum
states) and the development of a framework that can be used to relate
and compare these tasks.

This approach of performing learning on quantum states was
originally taken and defined in~\cite{learningquantumworld}, where
it was illustrated by giving an explicit algorithm for the task of
\emph{quantum clustering}, where the goal is to group in clusters
quantum states that are similar (using the fidelity as a similarity
measure) while putting states that are dissimilar in different
clusters. The model of learning on quantum states put forward in
this paper is complementary to a model proposed by
Aaronson~\cite{learningquantumstates}, where the training dataset is
composed of POVM's (\emph{Positive-Operator Valued Measurement}),
and not quantum states. In Aaronson's model, we receive a finite
number of copies of an unknown quantum state and the goal is, by
``training'' this state on a few POVM's, to produce with high
probability a hypothesis that can generalize with a reasonable
accuracy on unobserved POVM's belonging to this training dataset.

The outline of this paper is as follows. First, the model of
performing learning in a quantum world is introduced in
Section~\ref{learning_quantum} along with the notion of learning
reduction which allows us to relate together different learning
tasks. Afterwards, in Section~\ref{binary_classification}, the task
of binary classification is described, and the weighted and
multiclass versions of this task are defined respectively in
Sections~\ref{weighted_binary_classification} and~\ref{multiclass
classification}. Finally, Section~\ref{discussion_and_conclusion}
concludes with a discussion.

\section{Learning in a quantum world}
\label{learning_quantum}

\emph{Machine Learning}
(ML)~\cite{book_pattern_classification,book_machine_learning,book_statistical_learning}
is the field that studies techniques to
\emph{give to machines the ability to learn from past experience}.
Typical tasks in \emph{supervised learning} include the ability to
predict the class (\emph{classification}) or some unobserved
characteristic (\emph{regression}) of an object based on some
observations. In \emph{unsupervised learning}, the goal is to find
some structure hidden within the data such as discovering
``natural'' clusters (\emph{clustering}), finding a meaningful low-dimensional representation of the data (\emph{dimensionality
reduction}) or learning explicitly a probability function (also
called density function) that represents the true distribution of
the data (\emph{density estimation}). ML algorithms learn from a
training dataset which contains observations about objects, which
are either obtained empirically or acquired from experts.

\subsection{Learning with a classical dataset}

In \emph{classical} ML, the observations and the objects are
implicitly considered to be classical and the machine which performs
the learning is assumed to be a classical computer (such as a
classical Turing machine or a classical logical circuit). For
instance, in \emph{supervised learning}, a training dataset
containing $n$ data points can be described as \mbox{$D_n =
\{(x_1,y_1),\ldots,(x_n,y_n)\}$}, where $x_i$ would be a vector of
\emph{observations} on the \emph{characteristics} of the
$i^{\mbox{\scriptsize th}}$ object (or data point) and $y_i$ is the
corresponding \emph{class} of that object. As a typical example,
each object can be described using $d$ real-valued attributes (i.e.
$x_i \in \mathbb{R}^d$) and if we are dealing with binary
classification (i.e. $y_i \in \{-1,+1\}$).
\begin{exe}[Classical classification tasks]
Recognition of the digital fingerprints or the face of a person (in
this case each class corresponds to a person), automatically
classify a news article as belonging to the ``culture'' or
``sports'' section, detection of frauds, music genre classification,
etc$\ldots$
\end{exe}
The main difference between supervised and \emph{unsupervised}
learning is that in the latter case, the $y_i$ values are unknown.
This could mean that we know the possible labels in general but not
the specific label of each data point, or that even the number of
classes and their labels are unknown to us.

\subsection{Learning with a quantum dataset}

In a quantum world, an ML algorithm still needs a training dataset
from which to perform learning, but this dataset now contains
\emph{quantum} objects instead of classical observations on
classical objects (the machine is also a quantum computer).
\begin{defin}[Quantum training dataset]
A \emph{quantum training dataset} containing $n$ pure quantum states
can be described as
$D_n=\{(\myket{\psi_1},y_1),\ldots,(\myket{\psi_n},y_n)\}$, where
$\myket{\psi_i}$ is the $i^{\mbox{\scriptsize th}}$ quantum state of
the training dataset and $y_i$ is the class associated with this
state.
\end{defin}
\begin{exe}[Quantum training dataset composed of pure states defined
on $d$ qubits] In the context where all the pure states in the
training dataset live in a Hilbert space formed by $d$ qubits and we
are interested in the task of binary classification ;
\smash{$\myket{\psi_i} \in \mathbb{C}^{2^d}$} and $y_i \in
\{-1,+1\}$.
\end{exe}
In this work, we will restrict ourselves to the case where the
states are quantum but the classes remain classical. Further
generalization is to consider the situation in which objects can be
in a quantum superposition of classes\footnote{Note that being in a
quantum superposition of classes is not equivalent to the classical
notion of data point belonging to several classes in a fuzzy or
probabilistic manner.}. Another extension of the model is to allow
the quantum states to be mixed, and not only pure.

\subsection{Learning classes}

One of the intrinsic difficulty of defining learning in the quantum
world comes from the many ways in which quantum states can be
specified in the training dataset. For instance, the training
dataset could contain a finite number of copies of each quantum
state or consist of a classical description of these (like an
explicit description of their density matrices). This latter case is
the most ``powerful'' in the sense of information theory because
from a classical description of a state, it is always possible in
principle to produce as many quantum copies as desired.

To formalize this notion, the concept of learning classes that
differ in the form of the training dataset, the learner's
technological sophistication and his learning goal was introduced in
\cite{learningquantumworld}.
\begin{defin}[Learning class]
For \emph{learning class} $\mathsf{L}_\text{\it goal}^\text{\it
context}$, let subscript ``\textit{goal}'' refer to the learning
goal and superscript ``\textit{context}'' to the form of the
training dataset and/or the technology to which the learner has
access.\end{defin} Possible values for \textit{goal} are $cl$, which
stands for doing ML with a classical purpose, in mind, and $qu$ for
ML with a quantum motivation. Similarly, the superscript
\textit{context} can be $cl$ for ``classical'' if everything is
classical (with a possible exception for the goal) or $qu$ if
something ``quantum'' is going~on in the learning process. Other
%* comment Fred: accentuate the difference between goal and context
values for \textit{context} can be used when we need to be more
specific. For example, $\mathsf{L}_{cl}^{cl}$ corresponds to ML in
the usual sense, in which we want to use classical means to learn
from classical observations about classical objects. Another example
is $\mathsf{L}_{cl}^{qu}$, in which we have access to a quantum
computer to facilitate the learning but the goal remains to perform
a classical task: the quantum computer could serve to speed up the
learning process.

In this paper, we are only concerned with the specific case where
``\mbox{$\text{\it goal}={qu}$}\,''. \begin{defin}[Quantum learning
from the classical description of the quantum states]
$\mathsf{L}_{qu}^{cl}$ is defined as the \emph{learning class in
which we receive the classical descriptions of the quantum states}
from the training dataset (i.e.~\mbox{$D_n
=\{(\psi_1,y_1),\ldots,(\psi_n,y_n)\}$}, where $\psi_i$ is the
classical description of quantum state~$\myket{\psi_i}$).
\end{defin}
Learning becomes more challenging\footnote{Remark however that the classical description of a state is generally exponentially longer to write if it is represented classically as a string of bits compare to the corresponding quantum state in the form of qubits. Therefore, we can imagine a paradoxical situation where to describe classically the $2^{1000}$ amplitudes of quantum state defined on $1000$ qubits, we would need more memory than there are atoms in the universe, and this even if each atom could be used individually as a classical unit of memory (i.e. a bit). By contrast, if we can coherently manipulate the atoms and maintain them in superposition, $1000$ atoms would suffice to store the same state.}
%* verify the expression by contrast
when the dataset is available only
in its quantum form, in which case more copies make life easier as
we can potentially extract more information on the state. For
instance, a corollary of the \emph{Holevo
bound}~\cite{holevo:capacity73,corollaryHolevo} states that it is
impossible to extract more than $d$ classical bits of information
from a quantum state living in a Hilbert space formed by $d$ qubits.
Moreover, the \emph{no-cloning theorem}~\cite{nocloningtheorem}
forbids us to produce two identical copies of an unknown quantum
state. Finally, some \emph{tradeoffs exist between the amount of
information} that we can learn on a quantum state and \emph{the
corresponding perturbation} than this process will generate
(see~\cite{tradeoffinformationdisturbance} for instance).
\begin{defin}[Quantum learning from a finite number of copies of the
quantum states] $\mathsf{L}_{qu}^{\otimes s}$ is defined as the
\emph{learning class in which we are given at least $s$ copies of
each quantum state of the training dataset} (i.e.~\smash{$D_n
=\{(\myket{\psi_1}^{\otimes s},y_1),\ldots,(\myket{\psi_n}^{\otimes
s},y_n)\}$}; where \smash{$\myket{\psi_i}^{\otimes s}$} symbolizes
$s$ copies of state $\myket{\psi_i}$).
\end{defin}
Contrast these classes with ML in a classical world (such as
$\mathsf{L}_{cl}^{cl}$), in which additional copies of a particular
object are obviously useless as they do not carry new information.
The main purpose of defining quantum learning classes is to be able
to put some quantum training datasets and some learning tasks within
them.

The quantum learning classes form a \emph{hierarchy in an
information-theoretic sense}, where the higher a class is located
inside the hierarchy, the more information it contains in order to
realize tasks linked to the datasets belonging to this class. The
class $\mathsf{L}_{qu}^{cl}$ is at the top of the hierarchy since it
corresponds to having a \emph{complete knowledge} about the quantum
states forming the training set. Let $\equiv_{\ell}$, $\leq_{\ell}$
and $<_{\ell}$ be the operators which denote respectively the
\emph{equivalence}, the \emph{weaker or equal} and the
\emph{strictly weaker} relationships within the hierarchy. The
following propositions (first stated in~\cite{learningquantumworld})
describe some relations between the learning classes forming the
hierarchy.
\begin{prop}\label{prop_1}
$\mathsf{L}_{qu}^{\otimes s} \equiv_{\ell} \mathsf{L}_{qu}^{cl}$ as
$s \rightarrow \infty$.
\begin{proof}
When the number of copies tends to infinity, it is always possible
to estimate $\myket{\psi}$ using \emph{quantum tomography} and
reconstruct the classical description with arbitrary precision.
\end{proof}
\end{prop}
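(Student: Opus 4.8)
The plan is to establish the equivalence by proving the two relations $\mathsf{L}_{qu}^{\otimes s} \leq_{\ell} \mathsf{L}_{qu}^{cl}$ and $\mathsf{L}_{qu}^{cl} \leq_{\ell} \mathsf{L}_{qu}^{\otimes s}$ separately, the latter only in the limit $s \to \infty$. The first relation is the easy direction and in fact holds for every finite $s$: given the classical description $\psi_i$ of each state, a quantum computer can synthesize a unitary that prepares $\myket{\psi_i}$ from a fixed fiducial state and apply it $s$ times, thereby manufacturing the dataset $\{(\myket{\psi_i}^{\otimes s},y_i)\}$ on demand. Hence any learner operating in $\mathsf{L}_{qu}^{\otimes s}$ can be simulated by one in $\mathsf{L}_{qu}^{cl}$ with no loss of information, which is just a restatement of the fact, noted earlier in the excerpt, that $\mathsf{L}_{qu}^{cl}$ sits at the top of the hierarchy.

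For the converse direction I would invoke quantum state tomography. Fix a target precision $\varepsilon > 0$ and a confidence parameter $1-\delta$. Partitioning the $s$ available copies of a given $\myket{\psi_i}$ among an informationally complete family of measurement settings, one obtains empirical frequencies that, by standard concentration (Chernoff/Hoeffding) bounds, converge to the true Born probabilities as $s$ grows; a tomographic estimator (linear inversion followed by projection onto the set of valid states, or maximum-likelihood reconstruction) then outputs a classical description $\widehat{\psi}_i$ with $\Dist(\widehat{\psi}_i,\psi_i) \leq \varepsilon$ except with probability at most $\delta$, provided $s \geq s_0(d,\varepsilon,\delta)$ for an appropriate threshold depending on the Hilbert-space dimension. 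Carrying this out for each of the $n$ states, and union-bounding over the $n$ reconstructions, yields with high probability a classical dataset that is $\varepsilon$-close to the one a learner in $\mathsf{L}_{qu}^{cl}$ would possess. Demanding $\varepsilon \to 0$ (and $\delta \to 0$) forces $s \to \infty$, which is exactly the regime in the statement, so $\mathsf{L}_{qu}^{cl} \leq_{\ell} \mathsf{L}_{qu}^{\otimes s}$ asymptotically, and combined with the first relation this gives $\equiv_{\ell}$.

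I expect the main obstacle to be making the notion of ``equivalence in the limit'' precise enough to carry the argument without circularity: one must argue that an $\varepsilon$-approximate classical description is information-theoretically just as good as the exact one for the purposes of any downstream learning task, i.e.\ that the relation $\leq_{\ell}$ is robust under vanishingly small perturbations of the dataset. I would handle this by formulating the hierarchy relations in terms of a limiting process --- a family of reductions indexed by $s$ whose simulation error tends to $0$ --- rather than a single exact reduction, and by using the continuity of the fidelity (the similarity measure already adopted in the paper) to translate a vanishing tomographic error into a vanishing error in any fidelity-based learning objective. A secondary technical point worth flagging is that pure-state tomography is slightly more delicate than full density-matrix tomography, since the reconstruction should be certified close to a pure state; but this affects only the explicit form of $s_0(d,\varepsilon,\delta)$ and not the asymptotic conclusion, so I would not dwell on it.
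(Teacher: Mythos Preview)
Your proposal is correct and follows essentially the same route as the paper: the paper's proof is a single sentence invoking quantum tomography to reconstruct the classical description from unboundedly many copies, and your argument is a more careful elaboration of exactly that idea (together with the easy reverse direction, which the paper handles implicitly by having already declared $\mathsf{L}_{qu}^{cl}$ to sit at the top of the hierarchy). The additional rigor you propose --- concentration bounds, an explicit $s_0(d,\varepsilon,\delta)$, continuity of the learning objective under small perturbations --- goes well beyond what the paper supplies, but the underlying mechanism is identical.
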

\begin{prop}
$\mathsf{L}_{qu}^{\otimes 1} \leq_{\ell} \ldots \leq_{\ell}
\mathsf{L}_{qu}^{\otimes s} \leq_{\ell} \mathsf{L}_{qu}^{\otimes
s+1} \leq_{\ell} \ldots \leq_{\ell} \mathsf{L}_{qu}^{cl}$.
\begin{proof}
Each new copy of a state gives potentially more information on that
state. Therefore for any positive integer $s$, we have
\mbox{\smash{$\mathsf{L}_{qu}^{\otimes s} \leq_{\ell}
\mathsf{L}_{qu}^{\otimes s+1}$}}, which implies that if a learning
task $A\in\mathsf{L}_{qu}^{\otimes s}$, it also belongs to
$\mathsf{L}_{qu}^{\otimes s+1}$. Furthermore due to
Proposition~\ref{prop_1}, a classical description of a state is as
good as any number of copies.
\end{proof}
\end{prop}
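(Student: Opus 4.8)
The plan is to read off the whole chain from two elementary observations together with the transitivity of $\leq_\ell$. The relation $\mathsf{L}_{qu}^{\otimes s} \leq_\ell \mathsf{L}_{qu}^{\otimes s+1}$ should be interpreted as: any learning task that can be carried out given datasets with $s$ copies of each state can also be carried out given datasets with $s+1$ copies. So for each link I would exhibit a reduction --- a procedure that, starting from a dataset of the stronger of the two classes, manufactures (or emulates access to) a dataset of the weaker class --- and conclude that any learner succeeding against the weaker dataset can be run unchanged against the stronger one.

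For the links $\mathsf{L}_{qu}^{\otimes s} \leq_\ell \mathsf{L}_{qu}^{\otimes s+1}$, the reduction is to discard one copy of every state: from $D_n = \{(\myket{\psi_1}^{\otimes s+1},y_1),\ldots,(\myket{\psi_n}^{\otimes s+1},y_n)\}$ the learner keeps only $\{(\myket{\psi_i}^{\otimes s},y_i)\}$, which is exactly an $\mathsf{L}_{qu}^{\otimes s}$ dataset; iterating gives $\mathsf{L}_{qu}^{\otimes 1} \leq_\ell \mathsf{L}_{qu}^{\otimes 2} \leq_\ell \ldots$. For the final link $\mathsf{L}_{qu}^{\otimes s} \leq_\ell \mathsf{L}_{qu}^{cl}$ I would invoke Proposition~\ref{prop_1}: from a classical description $\psi_i$ one can in principle prepare arbitrarily many copies of $\myket{\psi_i}$, in particular $s$ of them, so a learner with access to the classical descriptions can synthesize any $\mathsf{L}_{qu}^{\otimes s}$ dataset and hence solve any task living there. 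Transitivity of $\leq_\ell$ then stitches the copy-by-copy chain to this top element and yields the displayed statement.

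The main obstacle is not any calculation but making the informal relation $\leq_\ell$ precise enough for the argument to be airtight: one has to commit to a definition along the lines of ``$\mathsf{A} \leq_\ell \mathsf{B}$ iff every learning task solvable using datasets of form $\mathsf{A}$ is solvable using datasets of form $\mathsf{B}$'' and then check that under this reading $\leq_\ell$ is reflexive and transitive, so that a chain of individual links really does compose. Once that bookkeeping is in place, each link is the essentially trivial remark that surplus copies may be thrown away and that a classical description is never less informative than any finite stock of copies.
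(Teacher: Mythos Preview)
Your proposal is correct and follows essentially the same approach as the paper: the paper argues that each additional copy ``gives potentially more information'' (your explicit ``discard one copy'' reduction is just the concrete mechanism behind this remark), and both you and the paper appeal to Proposition~\ref{prop_1} for the final link to $\mathsf{L}_{qu}^{cl}$. Your version is somewhat more careful in spelling out the reduction and the role of transitivity, but the underlying argument is the same.
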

\begin{prop}
$\mathsf{L}_{qu}^{\otimes s} + \mathsf{L}_{qu}^{\otimes 1}
\leq_{\ell} \mathsf{L}_{qu}^{\otimes s+1}$, where ``$+$'' denotes a
restriction that the first $s$ copies must be measured separately
from the the last.
\begin{proof}
Performing a \emph{joint} measurement by allowing $s+1$ copies to
interact together can potentially give more information than
performing a joint measurement on $s$ copies plus a separated
measurement on another copy. (See~\cite{jointmeasurement:91} for a
specific instance where $s=1$ and~\cite{covariantmeasurements:04}
for results about arbitrary $s$.)
\end{proof}
\end{prop}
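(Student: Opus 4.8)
The plan is to split the claim into an easy containment and a harder tightness question. For the containment $\mathsf{L}_{qu}^{\otimes s} + \mathsf{L}_{qu}^{\otimes 1} \leq_{\ell} \mathsf{L}_{qu}^{\otimes s+1}$, I would argue by direct simulation. An algorithm living in $\mathsf{L}_{qu}^{\otimes s} + \mathsf{L}_{qu}^{\otimes 1}$ receives, for every state of the training dataset, a register holding $s$ copies of $\myket{\psi_i}$ and a separate register holding one further copy, and it is only permitted to process them in two stages: a possibly collective measurement on the first $s$ copies, followed by a measurement on the last copy whose choice may depend on the first outcome. By the composition rules for generalised quantum measurements, this two-stage process is itself a particular POVM (equivalently, a particular quantum channel producing a classical label) acting on the full $(s+1)$-copy state $\myket{\psi_i}^{\otimes s+1}$. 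Hence any learning task solvable in $\mathsf{L}_{qu}^{\otimes s} + \mathsf{L}_{qu}^{\otimes 1}$ is a fortiori solvable in $\mathsf{L}_{qu}^{\otimes s+1}$, which is exactly the subset-of-achievable-tasks reading of $\leq_{\ell}$ already used in the proof of Proposition~2.

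The second and genuinely substantive part of the plan is to show that this containment is in general not an equivalence --- that fully collective processing of $s+1$ copies can strictly outperform the ``$s$ then $1$'' schedule --- which is what justifies writing $\leq_{\ell}$ rather than $\equiv_{\ell}$. Rather than build an example from scratch, I would lift a known separation from quantum estimation and discrimination theory. For $s=1$ one can take an ensemble for which a joint measurement on two copies extracts strictly more accessible information than any protocol measuring the two copies one at a time, even adaptively; Peres--Wootters-type phenomena and the construction of \cite{jointmeasurement:91} furnish exactly such an instance, and the associated optimal-prediction problem can be cast as a quantum classification task. For general $s$, I would invoke the covariant-measurement analysis of \cite{covariantmeasurements:04}, which quantifies the gap between more and less collective measurements as a function of the number of copies, and again phrase the resulting estimation/discrimination gap as a classification task separating the classes.

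The main obstacle lies entirely in this second part: the containment itself is essentially the one-line observation that a restricted measurement is still a measurement, but turning the well-known qualitative fact that ``collective measurements help'' into a clean statement inside this framework requires either (i) carefully importing an existing ensemble together with a matching loss function so that the accessible-information gap becomes a gap in classification performance, or (ii) accepting the weaker, purely information-theoretic reading in which the cited results already certify $\mathsf{L}_{qu}^{\otimes s} + \mathsf{L}_{qu}^{\otimes 1} \not\equiv_{\ell} \mathsf{L}_{qu}^{\otimes s+1}$. Since the proposition as stated only claims $\leq_{\ell}$, I would prove the containment in full and relegate the explicit separating task to a remark, exactly as the cited literature supports.
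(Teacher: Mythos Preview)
Your proposal is correct and follows essentially the same route as the paper: the paper's one-sentence proof simply observes that a joint measurement on all $s+1$ copies can potentially extract more information than the restricted ``$s$ then $1$'' schedule, and points to \cite{jointmeasurement:91} for the $s=1$ case and \cite{covariantmeasurements:04} for general $s$ --- exactly the two references you invoke. Your explicit simulation argument for the containment (a two-stage measurement is a particular POVM on $\myket{\psi_i}^{\otimes s+1}$) is a welcome bit of rigor that the paper leaves implicit, but the substance and the cited evidence are identical.
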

An interesting open question is whether or not this hierarchy is
strict.
\begin{que}[Strict hierarchy of learning classes]
In the expression $\mathsf{L}_{qu}^{\otimes 1} \leq_{\ell} \ldots
\leq_{\ell} \mathsf{L}_{qu}^{\otimes s} \leq_{\ell}
\mathsf{L}_{qu}^{\otimes s+1} \leq_{\ell} \ldots \leq_{\ell}
\mathsf{L}_{qu}^{cl}$, can some of these $\leq_{\ell}$ be replaced
by~$<_{\ell}$\,?
\end{que}
There are good reasons to believe that the answer is positive since
it is usually the case that more information can be obtained about a
quantum state when more copies are available. Moreover, it has been
proven that in some situations that joint measurements are more
informative than individual
measurements~\cite{jointmeasurement:91,covariantmeasurements:04}.
However, it does not \emph{necessarily} follow that this additional
information can be used in a constructive manner to solve some
learning tasks.

\subsection{Learning reduction}

The notion of \emph{reduction between learning
tasks}~\cite{learningreductions} was developed and formalized during
these last years in the context of classical ML by Langford and
co-authors\footnote{See for instance the webpage of Langford's
project on learning reductions
\url{http://hunch.net/~jl/projects/reductions/reductions.html}.}.
\begin{defin}[Learning reduction~\cite{learningreductions}]
A \emph{learning task \emph{A} reduces to some other learning task
\emph{B}} if by having access to a black-box (an oracle) that solves
\emph{B}, it is also possible to solve \emph{A}.
\end{defin}
A learning reduction can be seen as an \emph{information-theoretic
statement} about how well it is possible to solve a particular
learning task given an algorithm (modelled abstractly by an oracle)
that can solve another task. Although in general it is desirable for
this transformation to be efficient, learning reductions differ from
the ``traditional'' reductions used in complexity theory (such as
Turing or Karp reductions) in the sense they do not try to
characterize the computational time needed to solve a particular
task. Rather, learning reductions offer a way to compare and relate
two different learning tasks in the sense of information-theory. If
\emph{A} reduces to \emph{B}, it means that any progress made on how
to solve \emph{B} can be transferred directly to \emph{A} by using
the reduction. Moreover, if different tasks all reduce to a single
\emph{learning primitive}, it means that an improvement on this
primitive has a direct impact on all the other tasks. For instance,
in sections~\ref{weighted_binary_classification} and~\ref{multiclass
classification}, we will see how to solve the \emph{weighted binary}
and the \emph{multiclass classification} tasks given an oracle for
solving the \emph{standard binary classification}
(section~\ref{binary_classification}).

A good reduction often offers some guarantee on how well the
performance of the black-box in solving problem \emph{B} also
implies a good performance regarding problem \emph{A}. For instance
in classification, this guarantee could take the form of
\emph{upper bounds on the error achieved by the final classifier}. The
upper bounds generally relate the average error of the classifiers
generated by the oracle on subproblems \emph{B} to the global error
that the final combined classifier will make on the general problem
\emph{A}.

\begin{defin}[(Training) error]
The \emph{(training) error} $\epsilon$ (or \emph{error rate}) of a
classifier $f$ is defined as the probability that this classifier
predicts the wrong class $y_i$ on a quantum state $\myket{\psi_i}$
drawn randomly from the states of the quantum training dataset
$D_n$. Formally:
\begin{equation}
\epsilon_f = \frac{1}{n}
\sum_{i=1}^{n}\text{\emph{Prob}}(f(\myket{\psi_i})\neq y_i)
\end{equation}
\end{defin}

This definition characterizes precisely the training error of the
classifier but not its \emph{generalization error}, which
corresponds to how well the classifier predicts on states that it
has not observed exactly beforehand (i.e. that are not part of the
training dataset). For now, we will focus only on the minimization
of this training error but we will come back to the generalization
error (which is really the essence of ML) in the discussion
(Section~\ref{discussion_and_conclusion}).

In the context of quantum classification, the notion of
\emph{regret} also takes a particular importance.
\begin{defin}[Regret]
The \emph{regret} $r$ of a classifier $f$ is defined as the
difference between its error rate $\epsilon_f$ and the smallest
achievable error $\epsilon_{opt}$ that can be achieved on the same
problem. Formally:
\begin{equation}
r_f = \epsilon_f - \epsilon_{opt}
\end{equation}
\end{defin}
The regret of a classifier, as well as its error, can potentially
take any value in the range between zero and one. The concept of
regret is particularly meaningful in the context of hard learning
problems, where the raw error rate alone is not an appropriate
measure to characterize the inherent difficulty of the learning.
Indeed, in some learning situations, it is possible to observe a
high error rate but a low (or even null) regret. In the classical
setting, a high error rate but a low regret is an indication of a
high level of noise. The situation is different in the quantum world
where a high error rate might be due to the intrinsic physical
difficulty of distinguishing two classes, but does not necessarily
imply a high level of noise. Regardless of the context, if the
regret of a classifier is zero, it essentially means that this
classifier is optimal.

Quantumly, a reduction or a learning task may also have a
\emph{cost} associated with it. Indeed, each call to the oracle may
require sacrificing some copies of the quantum states due to the
measurements performed by the oracle during the training. This cost
is measured in terms of the number of copies required
\emph{individually} for each quantum state of the training dataset.
Another way to define this cost would have been to count
\emph{globally} the number of copies required relative to the size
of the training dataset\footnote{Which is generally the same as
multiplying the individual cost by a factor linear in $n$, the
number of states in the training dataset.}. The cost can be
differentiated between the number of copies needed during the
\emph{training/learning phase}, where we learn/build a POVM $f$ that
acts as the classifier, and during the \emph{classification time}
(or \emph{testing phase}) where we use $f$ to classify an unknown
quantum state $\myket{\psi_?}$.

\begin{defin}[Training/learning cost]
The \emph{training/learning cost} of a reduction is equal to the
number of calls to the oracle made by the reduction, multiplied by
the number of copies of each quantum states that are used in each
call. In the case of a learning task, the cost is directly
caracterised by the number of copies of each state necessary to
perform this task.
\end{defin}

If we have a classical description of the quantum states (i.e. $D_n
\in \mathsf{L}_{qu}^{cl}$), the training/learning does not cost anything in
terms of information because we already have complete knowledge of
the quantum states.

\begin{defin}[Classification cost]
The \emph{classification cost} corresponds to the number of copies
of the unknown quantum state $\myket{\psi_?}$ that will be used by
the classifier to predict the class $y_?$ of this state.
\end{defin}

In the next three sections, we will define respectively the quantum
analogues of three learning tasks: \emph{binary classification},
\emph{weighted binary classification} and \emph{multiclass
classification}.

\section{Binary classification}\label{binary_classification}

The task of \emph{binary classification} consists in predicting the
class $y_?\in\{-1,+1\}$ of an unknown quantum state
$\myket{\psi_?}$, given a \emph{single copy of this
state}\footnote{See however the work of Sasaki and
Carlini~\cite{templatematching:02} for the case of more than one
copy of the unknown state $\myket{\psi_?}$ are available.}.
Formally, this learning task can be defined in the following manner.

\begin{qlt}[(Quantum) binary classification]
\label{binary_class}
:\\
\textbf{Input}: \mbox{$D_n
=\{(\myket{\psi_1},y_1),\ldots,(\myket{\psi_n},y_n)\}$}, a quantum training
dataset, where
$\myket{\psi_i} \in \mathbb{C}^{2^d}$ and $y_i \in \{-1,+1\}$.\\
\textbf{Output}: A POVM acting as a binary classifier $f$ that can
predict the class $y_?$ of an unknown quantum state
$\myket{\psi_?}$ given a single copy of this state.\\
\textbf{Goal}: Construct a binary classifier $f$ that minimizes the
training error $\epsilon_f = \frac{1}{n}\sum_{i=1}^{n}
\text{Prob}(f(\myket{\psi_i})\neq y_i)$.
\end{qlt}

A natural question to ask is what is the best probability of success
we can hope for, or, equivalently, the smallest error rate
achievable. The easiest situation occurs when we have \emph{complete
classical knowledge of the quantum states which compose the training
dataset} ($D_n \in \mathsf{L}_{qu}^{cl}$). However, even in this
case, it is not generally possible to devise a process that always
correctly classifies any unknown state from a single copy of this
state. This remains true even if we know in advance that this state
corresponds \emph{exactly} to one of the states in the training
set\footnote{Unless we are in the trivial situation where all the
states are mutually orthogonal. In this case, a
\emph{non-destructive measure} in a basis formed by these states
will reveal the state without perturbing it.}. From the classical
description of the states, it is possible to analytically build the
optimal POVM that minimizes the training error. Of course, it
remains to be seen how such an approach would \emph{generalize when
faced with a state which does not belong to the training set}. This
fundamental question will be briefly discussed in
Section~\ref{discussion_and_conclusion}.

Let $m_-$ be the number of quantum states in $D_n$ for which $y_i =
-1$ (negative class), and its complement $m_+$ be the number of
states for which \mbox{$y_i = +1$} (positive class), such that $m_-
+ m_+ = n$, the total number of data points in $D_n$. Moreover,
$p_-$ is the \emph{a priori} probability of observing the negative
class and is equal to $p_-=\frac{m_-}{n}$, and $p_+$ its
complementary probability for the positive class such that $p_- +
p_+ = 1$.

\begin{defin}[Statistical mixture of the negative class]
The
\emph{statistical mixture representing the negative class} $\rho_-$, is defined as
\smash{$\frac{1}{m_-} \sum_{i=1}^n
\text{I}\{y_i=-1\}\myket{\psi_i}\!\mybra{\psi_i}$},
where $\text{I}\{.\}$ is the \emph{indicator function} which equals 1
if its premise is true and 0 otherwise.
%* argument?
\end{defin}
\begin{defin}[Statistical mixture of the positive class]
In the same manner, the \emph{statistical mixture representing the positive class} $\rho_+$ is defined as \smash{$\frac{1}{m_+}
\sum_{i=1}^n \text{I}\{y_i=+1\}\myket{\psi_i}\!\mybra{\psi_i}$}.
\end{defin}
The problem of classifying an unknown state $\myket{\psi_?}$ drawn
from the training set is equivalent to distinguish between the mixed
states $\rho_-$ and~$\rho_+$. Consider for instance the following
scenario which illustrates this idea.
\begin{sce}[Preparation of the state of a class by a demon\footnote{This scenario could be reformulate by replacing the demon by a probabilistic algorithm. This raises the question of how much classical memory will the algorithm need to remember the description of the states.}]
Imagine a demon that sits inside a black-box with a single button.
Each time the button is pressed, the demon chooses at random between
the negative and positive class according to their \emph{a priori}
probabilities $p_-$ and $p_+$. Once the class is determined, the
demon chooses uniformly at random one of the states belonging to
this class and prepares the corresponding state (we suppose that the
demon in its infinite power knows the classical description of the
states and can prepare perfectly any one of them). This state is
returned as output by the black-box. Therefore, finding the class of
this state is essentially the same as guessing which class the
demon\footnote{Here the role of the demon is simply to prepare the
state, and not to act as an adversary which tries to fool the
learner who is outside the box.} has chosen during the first step,
but not necessarily identifying the exact state.
\end{sce}

The minimal error rate of this classification process is linked to the
\emph{statistical overlap} of the mixtures $\rho_-$ and $\rho_+$. This
kind of problem has already been studied in \emph{quantum detection
and estimation theory}~\cite{helstrom:detection76}, a field that
predates quantum information processing. Some results from
this field can be used to \emph{give bounds on the best training
error} that quantum learning algorithms might reach.

\begin{thm}[Helstrom
measurement~\cite{helstrom:detection76}]\label{Helstrom_thm} The
error rate of distinguishing between the two classes $\rho_-$ and
$\rho_+$ is bounded from below by $\epsilon_{hel} = \frac{1}{2} -
\frac{D(\rho_-,\rho_+)}{2}$, where $D(\rho_-,\rho_+) =
\text{\emph{Tr}} |p_-\rho_- - p_+\rho_+|$ is a distance measure
between $\rho_-$ and $\rho_+$ called the \emph{trace distance}
(here, $p_-$~and $p_+$ represent the \emph{a priori} probabilities
of classes $\rho_-$ and $\rho_+$, respectively). Moreover, this
bound can be achieved exactly by the optimal POVM called the
\emph{Helstrom measurement}.
\end{thm}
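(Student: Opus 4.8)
The plan is to cast the minimum-error discrimination of $\rho_-$ and $\rho_+$ as an optimization over two-outcome POVMs and then solve that optimization exactly via the spectral decomposition of a single Hermitian operator. First I would note that, since there are only two hypotheses and only a single copy of $\myket{\psi_?}$ is available, the most general strategy is a two-outcome POVM $\{E_-,E_+\}$ with $E_-+E_+=I$ and $E_-,E_+\geq 0$, where registering outcome ``$\pm$'' means we output the guess $y_?=\pm 1$; coarse-graining any finer measurement together with its decision rule yields such a POVM without affecting the attainable error. The error probability is then $P_{err}=p_-\Tr(E_+\rho_-)+p_+\Tr(E_-\rho_+)$.

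Substituting $E_-=I-E_+$ and using $\Tr\rho_+=1$ gives $P_{err}=p_++\Tr\bigl(E_+(p_-\rho_- - p_+\rho_+)\bigr)$, so minimizing the error is the same as minimizing $\Tr(E_+\Delta)$ over all $E_+$ with $0\leq E_+\leq I$, where $\Delta:=p_-\rho_- - p_+\rho_+$ is Hermitian. This is the crux: diagonalizing $\Delta=\sum_k\delta_k\myket{k}\!\mybra{k}$ and using $\langle k|E_+|k\rangle\in[0,1]$, one obtains $\Tr(E_+\Delta)=\sum_k\delta_k\langle k|E_+|k\rangle\geq\sum_{k:\delta_k<0}\delta_k=-\Tr(\Delta_-)$, where $\Delta=\Delta_+-\Delta_-$ is the Jordan decomposition of $\Delta$ into positive operators with orthogonal support. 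Hence $P_{err}\geq p_+-\Tr(\Delta_-)$ for every measurement.

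It then remains to rewrite $\Tr(\Delta_-)$ in terms of the trace distance. Combining $\Tr(\Delta)=p_--p_+$ (since both $\rho_\pm$ have unit trace) with $\Tr|\Delta|=\Tr(\Delta_+)+\Tr(\Delta_-)=D(\rho_-,\rho_+)$ yields $\Tr(\Delta_-)=\tfrac12\bigl(D(\rho_-,\rho_+)-p_-+p_+\bigr)$. Plugging this back in and using $p_-+p_+=1$ gives $P_{err}\geq\tfrac12-\tfrac12 D(\rho_-,\rho_+)=\epsilon_{hel}$, the claimed lower bound. For achievability I would exhibit the Helstrom measurement explicitly: take $E_+$ to be the orthogonal projector onto the span of the eigenvectors of $\Delta$ with strictly negative eigenvalue (equivalently, the subspace on which $p_+\rho_+$ dominates $p_-\rho_-$) and $E_-=I-E_+$; then $\langle k|E_+|k\rangle=\mathrm{I}\{\delta_k<0\}$, every inequality above is saturated, and $P_{err}=\epsilon_{hel}$.

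I expect the only genuinely delicate point to be the variational step — arguing carefully that the pointwise constraint $\langle k|E_+|k\rangle\in[0,1]$ is exactly what is binding, that the optimum can be attained simultaneously in all eigendirections, and that a possible zero eigenvalue of $\Delta$ or eigenvalue degeneracies cause no difficulty (on the kernel of $\Delta$ the choice of $E_+$ is irrelevant). Everything else is bookkeeping with traces and the normalization identities $\Tr\rho_\pm=1$ and $p_-+p_+=1$.
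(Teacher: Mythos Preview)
Your argument is correct and is essentially the standard derivation of the Helstrom bound: reduce to a two-outcome POVM, write $P_{err}=p_++\Tr(E_+\Delta)$ with $\Delta=p_-\rho_--p_+\rho_+$, minimize over $0\le E_+\le I$ via the spectral decomposition of $\Delta$, and rewrite the result using $\Tr\Delta=p_--p_+$ and $\Tr|\Delta|=D(\rho_-,\rho_+)$. The variational step you flag as delicate is in fact fine as you have it: the lower bound $\sum_k\delta_k\langle k|E_+|k\rangle\ge\sum_{\delta_k<0}\delta_k$ holds term by term for any admissible $E_+$, and your explicit projector simultaneously saturates every term, so no further argument is needed.

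As for comparison with the paper: the paper does not supply its own proof of this theorem. It is stated as a quoted result from Helstrom's book~\cite{helstrom:detection76} and used as a black box; the subsequent Corollary on the regret of the Helstrom measurement is the first place a proof environment appears, and that proof simply appeals to the optimality asserted in the theorem. So your write-up goes beyond what the paper provides, and there is nothing in the paper to contrast it against.
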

\begin{cor}[Regret of Helstrom measurement]
The Helstrom's measurement is a binary classifier that has a \emph{null regret}, which means $r_{hel}=0$.
\begin{proof}
The null regret of the Helstrom measurement follows directly from
the optimality of this POVM to distinguish between the two classes.
\end{proof}
\end{cor}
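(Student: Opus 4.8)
The plan is to unwind the definition of regret and invoke the optimality clause of Theorem~\ref{Helstrom_thm}. Recall that the regret of a classifier $f$ is $r_f = \epsilon_f - \epsilon_{opt}$, where $\epsilon_{opt}$ denotes the smallest error rate achievable by \emph{any} classifier on the problem at hand. Hence it suffices to establish two facts: (i) the training error of the Helstrom measurement on $D_n$ equals $\epsilon_{hel} = \frac{1}{2} - \frac{D(\rho_-,\rho_+)}{2}$, and (ii) no binary classifier can attain a training error below $\epsilon_{hel}$, so that $\epsilon_{opt} = \epsilon_{hel}$. Putting the two together yields $r_{hel} = \epsilon_{hel} - \epsilon_{opt} = 0$.

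For both (i) and (ii), the crucial observation --- already made informally in the demon scenario above --- is that drawing a state uniformly at random from $D_n$ and being asked to name its class is \emph{exactly} the binary state-discrimination problem between the class mixtures $\rho_-$ and $\rho_+$ with a priori probabilities $p_- = m_-/n$ and $p_+ = m_+/n$. From the point of view of a POVM $f$ that only gets to see a single copy and is told nothing about which representative of a class was actually prepared, a state of the negative class is statistically indistinguishable from a sample of $\rho_-$, and likewise for the positive class; by linearity of the measurement statistics the per-state error terms collapse and $\epsilon_f = \frac{1}{n}\sum_{i}\text{Prob}(f(\myket{\psi_i}) \neq y_i)$ coincides with the average error probability of $f$ viewed as a discriminator of $\{(\rho_-, p_-), (\rho_+, p_+)\}$. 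Theorem~\ref{Helstrom_thm} states precisely that this average error is bounded below by $\epsilon_{hel}$ for every POVM and that the Helstrom measurement meets the bound --- which is (ii) and (i) respectively.

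The only step that genuinely requires care, and which I expect to be the main obstacle, is justifying rigorously this identification of the training error over the discrete dataset $D_n$ with the error of the two-mixture discrimination problem. It is intuitively clear --- the classifier receives a single copy with no side information, so only the averaged density matrix of each class matters --- but it must be spelled out, since it is what licenses replacing $\epsilon_{opt}$ by $\epsilon_{hel}$. Once that is in place, the corollary is immediate: optimality of the Helstrom POVM for the discrimination task transfers verbatim to optimality as a classifier on $D_n$, so its regret vanishes.
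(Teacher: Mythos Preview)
Your proposal is correct and follows essentially the same approach as the paper's proof, which is a one-line appeal to the optimality of the Helstrom measurement from Theorem~\ref{Helstrom_thm}. You simply spell out in more detail the identification --- already established in the paper just before Theorem~\ref{Helstrom_thm} --- between the training error on $D_n$ and the discrimination error for $\{(\rho_-,p_-),(\rho_+,p_+)\}$, which is the only content behind the corollary.
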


\begin{rem}[Error rate of the Helstrom measurement for equiprobable
classes]\label{Helstrom_rem} Consider the case where both the
negative class and the positive class are equiprobable. If $\rho_-$
and $\rho_+$ are two density matrices which correspond to the same
state, their trace distance $D(\rho_-,\rho_+)$ is equal to zero,
which means that the error $\epsilon_{hel}$ of the Helstrom
measurement is $\frac{1}{2}$. On the other hand, if $\rho_-$ and
$\rho_+$ are orthogonal, this means that $D(\rho_-,\rho_+)=1$ and
that the Helstrom measurement has an error $\epsilon_{hel}=0$.
\end{rem}
The purpose of a learning algorithm in the quantum setting is to
give a constructive way to come close to (or to achieve) the
Helstrom bound. If we know the classical description of the quantum
states, it corresponds to finding an efficient implementation of the
Helstrom measurement. If $D_n \in \mathsf{L}_{qu}^{\otimes s}$, the
learning becomes more challenging and it is difficult to
characterize the exact relationship between the number $s$ of copies
of each training state that are available, the dimension $d$ of the
Hilbert space in which the quantum states lives and the minimal
error rate $\epsilon$ we can hope to reach. Contrary to classical
ML, where it is always possible (but not recommended in terms of
generalization) to bring the training error down to zero (for
instance using a memory-based classifier such as 1-nearest
neighbour), the situation is different in the quantum context as
expressed by the following lemma.

\begin{lem}It is
\emph{impossible to reach a training error of zero in the quantum
case from a single copy of an unknown quantum state} unless of the
states of the training dataset are mutually orthogonal.
\begin{proof}
From Theorem~\ref{Helstrom_thm} and Remark~\ref{Helstrom_rem}, it is
easy to see that it is impossible to construct a POVM that perfectly
classifies a quantum state drawn from the training set $D_n$, unless
all the states of the ensemble are mutually orthogonal, or
equivalently that the distance between the two density matrices of
the classes is $D(\rho_-,\rho_+)=1$.
\end{proof}
\end{lem}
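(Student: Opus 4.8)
The plan is to reduce the claim to the two-state discrimination problem already settled by Helstrom's theorem (Theorem~\ref{Helstrom_thm}). First I would describe the classifier $f$ as a two-outcome POVM $\{E_-,E_+\}$ with $E_-,E_+\succeq 0$ and $E_-+E_+=I$, where outcome $E_y$ means ``predict class $y$'' (if $f$ has more outcomes, coarse-grain them according to the label they induce; this leaves every error probability unchanged, so it is without loss of generality). Under this description, the probability that $f$ misclassifies a training state $\myket{\psi_i}$ is $\mybra{\psi_i}E_{+}\myket{\psi_i}$ when $y_i=-1$ and $\mybra{\psi_i}E_{-}\myket{\psi_i}$ when $y_i=+1$. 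Since $\epsilon_f$ is an average of non-negative terms, $\epsilon_f=0$ forces each of these quantities to vanish.

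Next I would pass from the individual states to the class mixtures $\rho_-$ and $\rho_+$ using linearity of the trace: summing over the negative class gives $\Tr(E_+\rho_-)=\frac{1}{m_-}\sum_{i:\, y_i=-1}\mybra{\psi_i}E_+\myket{\psi_i}=0$, and likewise $\Tr(E_-\rho_+)=0$. Hence the same POVM $\{E_-,E_+\}$ distinguishes $\rho_-$ from $\rho_+$ with zero average error $p_-\Tr(E_+\rho_-)+p_+\Tr(E_-\rho_+)=0$. But Theorem~\ref{Helstrom_thm} says every such measurement incurs error at least $\epsilon_{hel}=\frac12-\frac{D(\rho_-,\rho_+)}{2}$, so $0\ge\frac12-\frac{D(\rho_-,\rho_+)}{2}$; combined with the general bound $D(\rho_-,\rho_+)\le 1$ this pins down $D(\rho_-,\rho_+)=\Tr|p_-\rho_- - p_+\rho_+|=1$.

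Finally I would translate $D(\rho_-,\rho_+)=1$ back into a statement about the states themselves. For the positive operators $A=p_-\rho_-$ and $B=p_+\rho_+$ one has $\Tr|A-B|\le\Tr A+\Tr B=1$, with equality exactly when $A$ and $B$ have orthogonal supports; since $\rho_-$ and $\rho_+$ are convex combinations of the rank-one projectors $\myket{\psi_i}\!\mybra{\psi_i}$ of their respective classes, orthogonality of supports is the same as $\langle\psi_i|\psi_j\rangle=0$ for every pair with $y_i=-1$ and $y_j=+1$, which is the mutual-orthogonality condition in the statement. The condition is moreover tight: when it holds, the POVM given by the projector onto the support of $\rho_-$ and its complement is a zero-error classifier, so ``unless'' can be read as ``if and only if''.

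The step I expect to carry the real weight is this last one --- the equality case of the triangle inequality for the trace norm of positive operators, which is precisely what separates ``orthogonal supports'' from merely ``large overlap'' --- rather than any of the routine reductions. I would also note that the single-copy hypothesis enters only through the task definition: with $s$ copies the relevant states are $\myket{\psi_i}^{\otimes s}$, whose pairwise inner products are the $s$-th powers of the original ones and hence still nonzero whenever the originals are, so the obstruction persists for every finite $s$.
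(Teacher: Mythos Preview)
Your argument is correct and follows the same route as the paper --- reduce to Helstrom's two-mixture bound (Theorem~\ref{Helstrom_thm}), force $D(\rho_-,\rho_+)=1$, and read this off as orthogonality --- only you actually supply the intermediate steps the paper leaves to the reader. Your final step is in fact sharper than the paper's wording: the condition you extract is orthogonality between states of \emph{opposite} classes (precisely what $D(\rho_-,\rho_+)=1$ encodes), whereas ``mutually orthogonal'' read literally would also constrain same-class pairs, a restriction that is neither necessary for zero training error nor implied by the Helstrom argument.
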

Given a finite number of copies of each state of the training set,
the possible \emph{learning strategies} include:
\begin{itemize}
\item[(1)] the estimation of the training set by making measurements (joint
or not) on some of the copies to construct a POVM that will
differentiate between the two classes,
\item[(2)] the design of a classification mechanism that uses the
copies only when the time of classifying an unknown quantum state
$\myket{\psi_?}$ comes or
\item[(3)] any hybrid strategy between (1) and (2).
\end{itemize}

For the classification, several \emph{measurement strategies} exist
in the quantum context such as:
\begin{itemize}
\item[(a)] \emph{maximizing the probability of predicting the class of an
unknown quantum state} (which corresponds to the Helstrom
measurement~\cite{helstrom:detection76}),
\item[(b)] \emph{minimizing the probability of making a wrong guess}. This
strategy is called \emph{unambiguous
discrimination}~\cite{unambiguousbounds:05} and is possible only
when the states of $D_n$ are linearly independent. In this specific
case, it is possible to design a measurement that is allowed to
sometimes answer ``I don't know'', but when it makes a prediction
regarding one the classes we can be $100\%$ confident than its
prediction is correct.
\item[(c)] \emph{any strategy between these two extremes} (a) and
(b). A \emph{confidence-based measurement}\footnote{The original
term is \emph{maximum-confidence
measurement}.}~\cite{confidencebasedmeasurements} is a measure that
can identify the class of a state with some confidence (that is
known), or answer ``I don't know'' the rest of the time. For a fixed
chosen confidence, the main objective when we build such a measure,
is to minimize the probability that it outputs ``I don't know''.
When the confidence is fixed at $100\%$ this directly corresponds to
the unambiguous discrimination, whereas if an inconclusive answer is
not allowed it corresponds to the Helstrom measurement. It is
sometimes possible to design a confidence-based measurement (with a
confidence greater than the Helstrom measurement) even when perfect
unambiguous discrimination is impossible (for instance if the states
of $D_n$ are linearly dependent).
\end{itemize}

In this paper, we will focus only (exception made of
section~\ref{class_state_identification}) on the measurement
strategy of maximizing the probability of identifying correctly the
class of a state (measurement strategy (a)) by learning from the
training dataset a POVM that can act as a classifier (learning
strategy (1)). We will make the assumption that we have access to an
oracle, called the \emph{Helstrom oracle}, than can efficiently
solve the task of binary classification.

\begin{defin}[Helstrom oracle]
The \emph{Helstrom oracle} is an abstract construction that takes as
input:
\begin{itemize}
\item[]\emph{Version 1:} a classical description of the density matrices $\rho_-$
and $\rho_+$ and their \emph{a priori} probabilities $p_-$ and $p_+$
(learning class $\mathsf{L}_{qu}^{cl}$) or
\item[]\emph{Version 2:} a finite number of copies of each state of the quantum
training dataset $D_n$ (learning class $\mathsf{L}_{qu}^{\otimes
\Theta(t_{bin})}$).
\end{itemize}
From this input, the oracle can be ``trained'' to produce an
efficient implementation (exact or approximative) of the POVM of the
Helstrom measurement $f_{hel}$, in the form of a quantum circuit
that can distinguish between $\rho_-$ and $\rho_+$. In the second
version of the oracle, its training cost $t_{bin}$ corresponds to
the minimum amount of copies of each state of the training dataset
that the oracle has to sacrifice in order to construct $f_{hel}$.
\end{defin}

One fundamental question deals with the (non-)existence of an
efficient implementation for the Helstrom measurement.

\begin{que}[Efficient implementation of the Helstrom measurement]
What are the learning situations (i.e. the ensembles of quantum
states) for which it is possible to implement efficiently (for
instance with a polynomial-size circuit) an approximate version of
the Helstrom measurement?
\end{que}

There is no \emph{a priori} guarantee that the description of the
POVM which corresponds to the Helstrom measurement can be physically
realized by a quantum circuit whose size is polynomial in the number
of input qubits. Indeed in the worst case, it could happen that this
circuit requires a number of gates that is exponential in its input
size, and this even for its approximate version.

By assuming the existence of the Helstrom oracle, we deliberately avoid the
burden of describing explicitly how the learning algorithm, which
acts as the oracle in practice, works (and how many quantum states
it requires for the learning process). Designing a learning
algorithm that can solve the binary classification task in practice
is a fundamental open question.

\begin{que}[Construction of a learning algorithm implementing the
Helstrom oracle] Is it possible to design a learning algorithm that
implements explicitly the Helstrom oracle? If so, what would be the
value of $t_{bin}$, the minimum number of copies of each training
state, that this algorithm requires during the learning?
\end{que}

This is a fundamental question on its own but instead we focus on
what tasks could be solved if we have access to such an oracle. If
we know a learning algorithm which has a low -- albeit not optimal
-- error rate, it is possible to use it instead of the Helstrom
oracle in almost all the reductions described in this paper.

Suppose that we have a binary classifier $f$ that can predict the
class of an unknown quantum state $\myket{\psi_?}$ with an error
$\epsilon$, for $\epsilon < \frac{1}{2}$. If we have access to a
constant number of copies of $\myket{\psi_?}$, we can simply repeat
the application of this classifier and output the majority of its
predictions. By standard Chernoff argument, this will diminishes the
probability of making an error exponentially fast with the number of
copies spent. This is true in the quantum world due to the inherent
probabilistic nature of the measurement process. In classical ML,
the situation is different as generally classifiers behave in a
deterministic manner, meaning that they will always predict the same
outcome when we present them with the same data point.

\section{Weighted Binary Classification} \label{weighted_binary_classification}

The \emph{weighted binary classification} task is similar to the
standard binary case, except that now each data point has a
\emph{weight} $w$ associated to it that indicates the importance of
correctly classifying this state. This weight can represent for
instance a penalty that we have to pay if we predict the wrong class
for this object. If $w=\frac{1}{n}$ for each state, then this corresponds to the standard binary classification.
\begin{qlt}[(Quantum) weighted binary classification]
\label{binary_class}
:\\
\textbf{Input}: $D_n
=\{(\myket{\psi_1},y_1,w_1),\ldots,(\myket{\psi_n},y_n,w_n)\}$, a
quantum training dataset, where
$\myket{\psi_i} \in \mathbb{C}^{2^d}$, $y_i \in \{-1,+1\}$ and $w_i \in [0,+\infty)$.\\
\textbf{Output}: A POVM acting as a binary classifier $f$ that can
predict the class $y_?$ of an unknown quantum state
$\myket{\psi_?}$.\\
\textbf{Goal}: Construct a binary classifier $f$ that minimizes the
weighted training error rate $\epsilon_f = \sum_{i=1}^{n} w_i
\text{Prob}(f(\myket{\psi_i})\neq y_i)$.
\end{qlt}

Once again, if we are in the idealized situation where we know the
classical descriptions of the states (learning class
$\mathsf{L}_{qu}^{cl}$), their weights can be directly incorporated
in the description of the density matrices of their classes. In this
scenario, the following reduction formalizes how to solve the
weighted binary classification task given the access to an Helstrom
oracle (version (1)).

\begin{red}[Reduction from weighted binary classification to
standard binary classification (via Helstrom
oracle)]\label{first_weighted_red} Given the access to an Helstrom
oracle that takes as inputs the description of the density matrices
$\rho_-$ and $\rho_+$ (and their \emph{a priori} probabilities $p_-$
and $p_+$), it is possible to \emph{reduce the task of weighted
binary classification to the task
of standard binary classification}.\\
\textbf{Training cost}: null.\\
\textbf{Classification cost}: $\Theta(1)$.
\begin{proof}
The weight $w_i$ of a particular state can be converted to a
probability $p_i$ reflecting its importance by setting
\begin{equation}
p_i=\frac{w_i}{\sum_{j=1}^{n}w_j}.
\end{equation}
Let $\hat{p}_-$, be the new \emph{a priori} probability of the
negative class, which is equal to
\begin{equation}
\hat{p}_- = \sum_{i=1}^{n}p_iI\{y_i=-1\}
\end{equation}
and $\hat{p}_+$, its complementary probability such that $\hat{p}_-
+ \hat{p}_+ = 1$. Theorem~\ref{helstrom_weighted} demonstrates that
the Helstrom measurement which discriminates between the density
matrices in which the weights are incorporated is precisely the POVM
which minimizes the weighted error. Therefore, it suffices to call
the Helstrom oracle with inputs
\begin{equation}
\hat{\rho}_-=\sum_{i=1}^{n}p_iI\{y_i=-1\}\myket{\psi_i}\mybra{\psi_i}
\end{equation}
and
\begin{equation}
\hat{\rho}_+=\sum_{i=1}^{n}p_iI\{y_i=+1\}\myket{\psi_i}\mybra{\psi_i}
\end{equation}
(with \emph{a priori} probabilities $\hat{p}_-$ et $\hat{p}_+$).
This reduction makes only one call to the Helstrom oracle and
requires only one copy of the unknown quantum state at
classification.
\end{proof}
\end{red}
\begin{thm}[Helstrom measurement minimizing the weighted
error]\label{helstrom_weighted} The Helstrom measurement which
minimizes the training error between $\hat{\rho}_-=\sum_{i=1}^{n}p_i
I\{y_i=-1\}\myket{\psi_i}\!\mybra{\psi_i}$ and
$\hat{\rho}_+=\sum_{i=1}^{n}p_i
I\{y_i=+1\}\myket{\psi_i}\!\mybra{\psi_i}$ (with \emph{a priori}
probabilities $\hat{p}_-$ and $\hat{p}_+$) is also the POVM which
minimizes the weighted classification error on the quantum training
dataset $D_n
=\{(\myket{\psi_1},y_1,w_1),\ldots,(\myket{\psi_n},y_n,w_n)\}$.
\end{thm}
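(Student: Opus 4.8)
The plan is to show that, viewed as an optimization over the set of binary classifiers, minimizing the weighted training error on $D_n$ is \emph{literally the same} optimization as minimizing the Helstrom discrimination error between $\hat{\rho}_-$ and $\hat{\rho}_+$ with a priori probabilities $\hat{p}_-$ and $\hat{p}_+$; the statement then falls out of Theorem~\ref{Helstrom_thm}. First I would recall that a binary classifier $f$ is, by definition, a two-outcome POVM $\{E_-,E_+\}$ with $E_-,E_+\succeq 0$ and $E_-+E_+=\mathbb{I}$, and that $\text{Prob}(f(\myket{\psi_i})=+1)=\Tr(E_+\,\myket{\psi_i}\!\mybra{\psi_i})$ and $\text{Prob}(f(\myket{\psi_i})=-1)=\Tr(E_-\,\myket{\psi_i}\!\mybra{\psi_i})$. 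This already makes the feasible set of ``classifiers'' in the weighted task coincide exactly with the feasible set of measurements in the state-discrimination task.

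Next I would rewrite the weighted error. Writing $W=\sum_{j=1}^n w_j$ and $p_i=w_i/W$ as in Reduction~\ref{first_weighted_red}, the objective of the weighted task is
\begin{equation}
\epsilon_f=\sum_{i=1}^n w_i\,\text{Prob}(f(\myket{\psi_i})\neq y_i)
  = W\sum_{i=1}^n p_i\,\text{Prob}(f(\myket{\psi_i})\neq y_i).
\end{equation}
Splitting the sum according to the true label $y_i$ and inserting the POVM expressions above,
\begin{equation}
\sum_{i=1}^n p_i\,\text{Prob}(f(\myket{\psi_i})\neq y_i)
 = \sum_{i:\,y_i=-1} p_i\,\Tr(E_+\myket{\psi_i}\!\mybra{\psi_i})
 + \sum_{i:\,y_i=+1} p_i\,\Tr(E_-\myket{\psi_i}\!\mybra{\psi_i}).
\end{equation}
By linearity of the trace the two sums collapse to $\Tr\!\big(E_+\sum_i p_i I\{y_i=-1\}\myket{\psi_i}\!\mybra{\psi_i}\big)+\Tr\!\big(E_-\sum_i p_i I\{y_i=+1\}\myket{\psi_i}\!\mybra{\psi_i}\big)=\Tr(E_+\hat{\rho}_-)+\Tr(E_-\hat{\rho}_+)$, with $\hat{\rho}_-,\hat{\rho}_+$ exactly as defined in the statement. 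But $\Tr(E_+\hat{\rho}_-)+\Tr(E_-\hat{\rho}_+)$ is precisely the average probability of a wrong guess when one is handed a state that is $\hat{\rho}_-$ with prior $\hat{p}_-$ and $\hat{\rho}_+$ with prior $\hat{p}_+$ (equivalently, the normalized states $\hat{\rho}_\pm/\hat{p}_\pm$ with priors $\hat{p}_\pm$) and uses the POVM $\{E_-,E_+\}$ to decide the label --- i.e.\ the quantity governed by Theorem~\ref{Helstrom_thm}.

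Finally I would conclude: since $W>0$ is a constant independent of $f$, $\argmin_f\epsilon_f$ over two-outcome POVMs equals $\argmin_f\big[\Tr(E_+\hat{\rho}_-)+\Tr(E_-\hat{\rho}_+)\big]$, and by Theorem~\ref{Helstrom_thm} the latter is attained by the Helstrom measurement for the pair $(\hat{\rho}_-,\hat{\rho}_+)$ with priors $(\hat{p}_-,\hat{p}_+)$, the minimal weighted error being $W$ times the corresponding Helstrom bound. I expect the only real friction to be notational bookkeeping: being careful that the weight-to-prior conversion is applied once and only once (so the weights appear inside $\hat{\rho}_\pm$, or equivalently as the priors $\hat{p}_\pm$ multiplying normalized states, but not both), and stating explicitly that rescaling the whole objective by $W$ does not move the $\argmin$. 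A secondary point worth spelling out is that the identification of feasible sets is exact --- every binary classifier is a two-outcome POVM and conversely --- so ``the POVM minimizing the weighted error'' and ``the POVM optimally discriminating $\hat{\rho}_-$ from $\hat{\rho}_+$'' range over the same set and hence must coincide.
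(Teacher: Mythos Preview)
Your proposal is correct and follows essentially the same approach as the paper: both arguments expand the weighted error in terms of the POVM elements $\{E_-,E_+\}$ (the paper writes $\{\Pi_-,\Pi_+\}$), collapse the sum into $\Tr(E_+\hat{\rho}_-)+\Tr(E_-\hat{\rho}_+)$, and then invoke Theorem~\ref{Helstrom_thm} to identify the minimizer as the Helstrom measurement. If anything, your version is slightly more careful with the bookkeeping (making explicit that the constant $W$ does not move the $\argmin$ and that the feasible sets coincide), whereas the paper's equations (12)--(14) carry an extra nested $\min_f$ that is notationally loose but harmless.
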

\begin{proof}
The Helstrom measurement is the POVM $f$ that minimizes the
discrimination error between $\hat{\rho}_-$ and $\hat{\rho}_+$. This
POVM can be decomposed into two elements $\Pi_-$ et $\Pi_+$ which
both correspond to positive semi-definite matrices such that $\Pi_-
+ \Pi_+ = \mathsf{I}$, where $\mathsf{I}$ is the identity matrix.
Therefore, we have:
\begin{equation} \epsilon_{Hel} = \min_f
(\Tr(\Pi_-\hat{\rho}_+) + \Tr(\Pi_+\hat{\rho}_-))
\end{equation}
that can also be express as
\begin{eqnarray}
\epsilon_{Hel} = \min_f
\sum_{i=1}^{n}p_iI\{y_i=+1\}\Tr(\Pi_-\myket{\psi_i}\mybra{\psi_i})+
\\
\sum_{i=1}^{n}p_iI\{y_i=-1\}\Tr(\Pi_+\myket{\psi_i}\mybra{\psi_i})
\end{eqnarray}
and that simplifies to
\begin{equation}
\epsilon_{Hel} = \min_f
\left(\sum_{i=1}^{n}p_i\text{Prob}(f(\myket{\psi_i})\neq y_i)\right)
\end{equation}
which is the same as minimizing the weighted training error:
\begin{equation}
\epsilon_{opt}=
\min_f\left(\sum_{j=1}^{n}w_j\times\epsilon_{Hel}\right)
\end{equation}
\begin{equation}
\epsilon_{opt}=\min_f\left(\sum_{j=1}^{n}w_j\sum_{i=1}^{n}p_i\text{Prob}(f(\myket{\psi_i})\neq
y_i)\right)
\end{equation}
\begin{equation}
\epsilon_{opt}= \min_f
\left(\sum_{i=1}^{n}w_i\text{Prob}(f(\myket{\psi_i})\neq y_i)\right)
\end{equation}
As this POVM is optimal, it automatically implies that its regret is
zero.
\end{proof}

\begin{algorithm}
\caption{$\mathsf{rejection\_sampling}$($D_n \in
\mathsf{L}_{qu}^{\otimes \Theta(t_{bin})}$)}
\label{rejection_sampling}
\begin{algorithmic} \raggedright
\STATE Choose a constant $c$ greater than any weight $w$
  \FOR{each state $\myket{\psi_i}^{\otimes \Theta(t_{bin})}$}
  \STATE Flip a coin which has a \emph{bias}
  of $\frac{w_i}{c}$
  \IF{the result is ``tails''}
  \STATE Keep the copies of the state
  \ELSE \STATE Put them aside
  \ENDIF
  \ENDFOR
\STATE \textbf{Return} the new generated distribution
$\widetilde{D}$
\end{algorithmic}
\end{algorithm}

\begin{algorithm}
\caption{$\mathsf{costing\_training}$($D_n \in
\mathsf{L}_{qu}^{\otimes \Theta(Tt_{bin})}$)}
\label{costing_training}
\begin{algorithmic} \raggedright
 \FOR{$j=1$ to $T$}
  \STATE Call $\mathsf{rejection\_sampling}$($D_n \in
\mathsf{L}_{qu}^{\otimes \Theta(t_{bin})}$) to obtain
  $\widetilde{D}_j$
  \STATE Call the Helstrom oracle on $\widetilde{D}_j$ to learn the binary classifier $f_j$
  \ENDFOR
\STATE \textbf{Return} the final classifier
$f=\text{majority}(f_1,\ldots,f_T)$
\end{algorithmic}
\end{algorithm}

In the case where only a finite number of copies of each quantum
state is accessible, but we know a way of producing an efficient
binary classifier (such as the Helstrom oracle, version 2), then the
\emph{costing reduction}~\cite{costingreduction} enables to reduce
weighted binary classification to standard binary classification.
This reduction proceeds via a \emph{rejection sampling} mechanism
(Algorithm~\ref{rejection_sampling}) and the \emph{aggregation of
several classifiers} (Algorithm~\ref{costing_training}), and
generates an ensemble of $T$ binary classifiers, where $T$ is a
small constant chosen independently from $D_n$.

The output of the final classifier is simply a majority vote on the
outputs of the individual classifiers. The number of copies of the
unknown state $\myket{\psi_?}$ used by the final classifier is a
constant $\Theta(T)$, corresponding to the number of binary
classifiers forming the aggregated classifier
(Algorithm~\ref{costing_classification}). It is clear that the more
evaluations are done, the more accurate the classification will be,
but more copies of $\myket{\psi_?}$ will be needed.

\begin{red}[Reduction from weighted binary classification to
standard binary classification (via
costing~\cite{costingreduction})] Given the access to an Helstrom
oracle (version 2) and a quantum training dataset
$D_n\in\mathsf{L}_{qu}^{t_{bin}}$, it is possible to \emph{reduce
the task of weighted binary classification to the task
of standard binary classification}.\\
\textbf{Training cost}: $\Theta(Tt_{bin})$.\\
\textbf{Classification cost}: $\Theta(T)$.
\begin{proof}
During the training, the algorithm $\mathsf{costing\_training}$
calls the Helstrom oracle $T$ times, for a constant $T$ chosen
independently from the training dataset $D_n$. The training cost is
therefore $\Theta(Tt_{bin})$, which corresponds to the number of
calls to the Helstrom oracle multiplied by $t_{bin}$ the number of
copies of each state required at each call. As each call to the
Helstrom oracle produces a classifier, the classification cost is
$\Theta(T)$, which requires to use a copy of the unknown state
$\myket{\psi_?}$ for each generated classifier. The analysis of the
costing reduction~\cite{costingreduction} demonstrates that the
average of the standard training errors that minimize the individual
classifiers $f_1,\ldots,f_T$ on the distributions
$\widetilde{D}_1,\ldots,\widetilde{D}_T$ is the same as indirectly
minimizing the weighted training error of the global classifier $f$,
which means:
\begin{equation}
\epsilon_f \sim \min_f\sum_{i=1}^{n} w_i
\text{Prob}(f(\myket{\psi_i})\neq y_i)
\end{equation}
\end{proof}
\end{red}

\begin{algorithm}
\caption{%{}\\
$\mathsf{costing\_classification}$($\myket{\psi_?}^{\otimes
\Theta(T)},f=(f_1,\ldots,f_T)$)} \label{costing_classification}
\begin{algorithmic} \raggedright
 \FOR{$j=1$ to $T$}
  \STATE Measure $y_j = f_j(\myket{\psi_?})$
  \ENDFOR
\STATE \textbf{Return} $y_?=\text{majority}(y_1,\ldots,y_T)$
\end{algorithmic}
\end{algorithm}

The quantum version of rejection sampling
(Algorithm~\ref{rejection_sampling}) has the additional benefit of
``saving'' some copies of the quantum states during the generation
of the distribution biased according to their weights. Indeed, the
states having a low weight have a higher probability of not being kept
in the new generated distribution. Therefore, these states can be
put aside and used later, for instance during another step of
rejection sampling.

\section{Multiclass classification}
\label{multiclass classification}

In the multiclass version of classification, each state is labeled
after a class chosen among $k$ possible ones, for $k>2$. The goal is
to build a classifier $f$ which, given a finite number of copies of
an unknown state $\myket{\psi_i}$, can predict its class $y_?$ with
a good accuracy.

\begin{qlt}[(Quantum) multiclass classification]
\label{binary_class}
:\\
\textbf{Input}: \mbox{$D_n
=\{(\myket{\psi_1},y_1),\ldots,(\myket{\psi_n},y_n)\}$}, a quantum
training dataset, where
$\myket{\psi_i} \in \mathbb{C}^{2^d}$ and $y_i \in \{1,\ldots,k\}$.\\
\textbf{Output}: A POVM acting as a multiclass classifier $f$ that
can predict the class $y_?$ of an unknown quantum state
$\myket{\psi_?}$.\\ \textbf{Goal}: Construct a multiclass classifier
$f$ that minimizes the training error rate $\epsilon_f = \frac{1}{n}
\sum_{i=1}^{n} \text{prob}(f(\myket{\psi_i})\neq y_i)$.
\end{qlt}
Moving from the binary to the multiclass case is far from being
trivial, and very few things are known for the case where the number
of classes $k>2$. In particular even for three classes, the exact
form of the optimal POVM that can distinguish between these three
classes given a single copy of a state is not known. However, we
will see in Section~\ref{section_pgm} that if we know the classical
description of the states, it is possible to design a measure
(called the \emph{Pretty Good
Measurement}~\cite{prettygoodmeasurement}), whose error is bounded
by the square root of the error of the optimal POVM.

The following sections describe different training and
classification strategies for the cases where we have access to a
number of copies of the unknown quantum state $\myket{\psi_?}$ to
classify which is:\\ - linear in $n$, the number of states in
$D_n$ (Section~\ref{class_state_identification}).\\
- linear in $k$, the number of classes in $D_n$
(Section~\ref{section_oneagainstall}).\\
- logarithmic in $k$ (Section~\ref{section_binary_tree}).\\
- a single copy or possibly a constant number of them
(Section~\ref{section_pgm}).

\subsection{Classification via state
identification}\label{class_state_identification}

The most direct way of recognizing the class of a state is to
\emph{identify exactly this state}. Once the state is identified,
this information allows also to recover directly its class (unless
there are two, or more, states that are identical but labeled with
different classes). If $\Theta(n)$ copies of the unknown quantum
state $\myket{\psi_?}$ are available, the \textsf{Control-Swap}
test~\cite{symmetrisation:96,fingerprinting:01} can be used between
this state and each of the state of the training dataset
$D_n\in\mathsf{L}_{qu}^{\Theta(1)}$. This method does not require
any effort during the training, all the work being done at
classification time (therefore it corresponds to a learning strategy
type (2), Section~\ref{binary_classification}). This learning
strategy can be seen as the quantum analogue of the one-nearest
neighbours. Indeed, for each state of the training dataset, we
search the one which is the closest/ the most similar (in the sense
of fidelity) from the unknown quantum state. Unless there are two
quantum states in $D_n$ that are identical but labeled with two
different classes, this method is guarantee to have a null
classification error (and therefore a null regret). The following
algorithm formalize this method.

\begin{algorithm}
\caption{%{}\\
$\mathsf{classification\_via\_identification}$($\myket{\psi_?}^{\otimes
\Theta(n)}$, $D_n \in \mathsf{L}_{qu}^{\otimes \Theta(1)}$)}
\label{identification_classification}
\begin{algorithmic} \raggedright
 \FOR{$i=1$ to $n$}
  \STATE Measure the fidelity between $\myket{\psi_?}$ and $\myket{\psi_i}$ by using the \textsf{Control-Swap} test which gives an estimate of $\Fid(\myket{\psi_?},\myket{\psi_i})$
  \ENDFOR
\STATE \textbf{Return} the class $y_j$ of the state $\myket{\psi_j}$
whose fidelity with the unknown quantum state is maximal $\argmax_j
\Fid(\myket{\psi_?},\myket{\psi_j})$
\end{algorithmic}
\end{algorithm}

\begin{thm}[Classification via state identification]
The algorithm $\mathsf{classification\_via\_identification}$
classifies an unknown quantum state $\myket{\psi_?}$ with a null
classification error given $\Theta(n)$ copies of this state and
$\Theta(1)$ copies of each state of $D_n$.
\begin{proof}
Each \textsf{Control-Swap} test require a constant number of copies
and as we estimate the similarity between $\myket{\psi_?}$ and all
the $n$ states of $D_n$, the global cost of
$\mathsf{classification\_via\_identification}$ will be $\Theta(n)$
copies of the unknown state and $\Theta(1)$ of each state of the
training dataset. Moreover, if there are not two states in $D_n$
that are identical but labeled with two different classes, the
algorithm is guarantee to obtain of null classification error (which
implies a null regret).
\end{proof}
\end{thm}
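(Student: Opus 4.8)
The plan is to establish two things separately: the cost bound and the zero-error guarantee. Both are essentially bookkeeping once the \textsf{Control-Swap} test is taken as a black box with the standard guarantee that it estimates the fidelity $\Fid(\myket{\psi_?},\myket{\psi_i}) = |\mybra{\psi_?}\psi_i\rangle|^2$ to any desired constant precision using a constant number of copies of each input state.

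First I would handle the cost. Since the training dataset lives in $\mathsf{L}_{qu}^{\otimes \Theta(1)}$, we have a constant number of copies of each $\myket{\psi_i}$, which is exactly what one \textsf{Control-Swap} test consumes on the training side. On the query side, each of the $n$ tests consumes $\Theta(1)$ copies of $\myket{\psi_?}$; summing over $i = 1,\ldots,n$ gives $\Theta(n)$ copies of the unknown state. This matches the budget stated in the algorithm's signature, so nothing more is needed here.

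Next I would argue the zero classification error. The key observation is that $\myket{\psi_?}$ is, by the definition of the task, drawn from the states of $D_n$, so it equals some $\myket{\psi_j}$ exactly. Then $\Fid(\myket{\psi_?},\myket{\psi_j}) = 1$, while $\Fid(\myket{\psi_?},\myket{\psi_i}) < 1$ for every $i$ whose state is not identical to $\myket{\psi_j}$. Hence the $\argmax$ in Algorithm~\ref{identification_classification} selects an index whose associated state is identical to $\myket{\psi_?}$, and under the stated hypothesis that no two states in $D_n$ are identical but carry different labels, the returned class $y_j$ is correct. Averaging over the random draw of $\myket{\psi_?}$ from $D_n$ therefore gives $\epsilon_f = 0$, and consequently (by the definition of regret and the fact that $\epsilon_{opt} \ge 0$) a null regret.

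The main obstacle — and the place where the argument needs care rather than hand-waving — is that the \textsf{Control-Swap} test returns only a \emph{statistical estimate} of the fidelity, not its exact value, so one must ensure the estimation error is smaller than the smallest gap $1 - \max\{\Fid(\myket{\psi_?},\myket{\psi_i}) : \myket{\psi_i} \not\equiv \myket{\psi_?}\}$ between the true maximum and the next fidelity value. Because $D_n$ is finite and fixed, this gap is a positive constant (or the non-maximal states can be grouped with the maximal one when they are genuinely identical), so a constant number of repetitions of each test drives the probability of a misranking below any constant, keeping the per-state and per-query costs at $\Theta(1)$ and $\Theta(n)$ respectively; I would state this explicitly so the ``null error'' claim is understood as holding with certainty in the idealized exact-fidelity model and with probability arbitrarily close to $1$ in the realistic sampling model.
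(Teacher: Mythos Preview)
Your argument follows essentially the same route as the paper's own proof: count the copies consumed by the $n$ \textsf{Control-Swap} tests to get the $\Theta(n)$ and $\Theta(1)$ bounds, then observe that the fidelity is $1$ exactly at the true state and strictly below $1$ elsewhere (under the no-duplicate-labels assumption), so the $\argmax$ returns the correct class. Your treatment is in fact more careful than the paper's, which asserts the null error without addressing the fact that the \textsf{Control-Swap} test yields only a statistical estimate; your final paragraph identifying the fidelity gap and the idealized-versus-sampling distinction is a genuine improvement on the original exposition.
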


If we want to base the prediction of the class of $\myket{\psi_?}$
on its $k$ nearest neighbours instead of only its nearest neighbour,
the Algorithm~\ref{identification_classification} can be easily
adapt to base its prediction on a majority vote of their classes
(the training and classification cost remain unchanged). An
interesting avenue of research is to design a quantum equivalent to
classical data structures that can be used to facilitate the search
for nearest neighbours, such as the \emph{k}d-trees~\cite{kdtrees}
for instance. Quantumly, the main purpose of such a structure would
be to retrieve the nearest neighbours of an unknown state by
consuming less copies than require with the direct na\"{\i}ve method
(for instance by using a number of copies logarithmic in $n$ and
linear in $c$ the number of neighbours considered). If we do not
know the classical description of the states, the construction of
this data structure may have a non-negligible training cost.

\subsection{One-against-all reductions}\label{section_oneagainstall}

\begin{algorithm}
\caption{%{}\\
$\mathsf{one\_against\_all\_training}$($D_n \in
\mathsf{L}_{qu}^{\otimes \Theta(kt_{bin})}$)}
\label{one_against_all_training}
\begin{algorithmic} \raggedright
 \FOR{$j=1$ to $k$}
 \STATE Initialize $D^{(j)}$ as the empty dataset
 \FOR{$i=1$ to $n$}
  \STATE Add the example $(\myket{\psi_i}^{\otimes
\Theta(t_{bin})},1-2I\{y_i=j\})$ to $D^{(j)}$
  \ENDFOR
  \STATE Call the Helstrom oracle on the dataset $D^{(j)}$ to learn a binary classifier $f_j$  that discriminates between the class $j$
  and the union of all the other classes
  \ENDFOR
\STATE \textbf{Return} the ensemble of binary classifiers
$f_1,\ldots,f_j$
\end{algorithmic}
\end{algorithm}

The main idea of the \emph{one-against-all
reduction}~\cite{oneagainstall} is to train a binary classifier for
each of the $k$ classes. Each of this binary classifier
\emph{discriminates between its own class and the union of all the
other classes}. This reduction can be adapted in a straightforward
manner to the quantum context by constructing for each class a POVM
acting as a binary classifier, which discriminates between the
density matrix of this class and the statistical mixture composed of
the density matrices of the other classes. We will say that a
classifier ``click'' if it predicts that the unknown state
$\myket{\psi_?}$ belongs to its own class, and that it ``does not
click'' otherwise. Given the access to an Helstrom oracle, it is
possible to reduce the multiclass classification to the standard
binary case by using the following training and classification
algorithms (Algorithms~\ref{one_against_all_training}
and~\ref{one_against_all_classification}).

\begin{algorithm}
\caption{%\\
$\mathsf{one\_against\_all\_classification}$($\myket{\psi_?}^{\otimes
k}$)} \label{one_against_all_classification}
\begin{algorithmic} \raggedright
 \FOR{$j=1$ to $k$}
 \STATE Apply a binary classifier $f_j$ on $\myket{\psi_?}$ to obtain the prediction whether or not this state belongs to the class $j$
  \ENDFOR
  \IF{only one classifier ``has clicked''}
\STATE \textbf{Return} the class associated with the classifier which has ``clicked''
 \ELSE \IF{several classifiers have ``clicked''}
 \STATE \textbf{Return} a class chosen at random among all the classifiers which have ``clicked''
 \ELSE
 \STATE \textbf{Return} a class chosen uniformly at random among the $k$ classes
 \ENDIF
 \ENDIF
\end{algorithmic}
\end{algorithm}

\begin{red}[Reduction from multiclass classification to standard binary classification (via one-against-all)]
Given the access to an Helstrom oracle and a quantum training dataset $D_n\in\mathsf{L}_{qu}^{\Theta(k t_{bin})}$, it is possible to \emph{reduce the multiclass classification task to the standard binary classification via a one-against-all reduction}.\\
\textbf{Training cost}: $\Theta(k t_{bin})$.\\
\textbf{Classification cost}: $\Theta(k)$.
\begin{proof}
The algorithm $\mathsf{one\_against\_all\_training}$ calls the
Helstrom oracle a number of times which is linear in the number of
classes $k$, and each call consumes a number of copies of each state
of $D_n$ in $\Theta(t_{bin})$. Therefore, the training cost of this
reduction is $\Theta(k t_{bin})$. Regarding the classification, we
need to sacrifice a copy of the unknown state $\myket{\psi_?}$ for
each of the $k$ binary classifiers generated, which leads to a total
cost of $\Theta(k)$.

Regarding the analysis of the error of this reduction, let
$\epsilon_j$ be the error of the classifier of class $j$. The worst
situation that can happen is that the classifier of the ``good
class'' does not click (which corresponds to a false negative). In
this situation and if no other classifier has clicked, we choose the
class to predict uniformly at random, which lead to an error with
probability $\frac{k-1}{k}$. In the case of false positives, where
$c$ classifiers click when they should not, the error rate will be
only $\frac{c}{c+1}$ because we will choose at random among the
classifiers which have reacted. As each binary classifier $f_j$
leads to an error rate of $\frac{k-1}{k}$ in the worst case with
probability $p_j\epsilon_j$ (where $p_j$ is the \emph{a priori}
probability of class $j$) and there are $k$ binary classifiers, the
global error of the classifier will be upper bounded by
$\frac{k-1}{k}\sum_{j=1}^{k}p_i\epsilon_j$, which simplifies itself
to $(k-1)\epsilon$ if all the classes have the same \emph{a priori}
probability $\frac{1}{k}$ and the same error rate $\epsilon$ for all
the binary classifiers. (This reduction does not seem to offer any
guarantee for the regret.)
\end{proof}
\end{red}
\begin{rem}[Difficulty of intermediary learning situations generated by the reduction]
Nothing guarantee \emph{a priori} than the intermediary learning
situations generated by the reduction (for instance here the $k$
binary classification) are easy to solve. Indeed, even if the access
to the Helstrom oracle guarantee than the $k$ binary classifiers
will be optimal for their respective classification settings, it is
possible than the observed average error will be important. In the
quantum case, it can happen for instance than the trace distance
between the density matrix of a class and the mixture composed of
the union of all the other classes is low (which implies that they
are difficult to distinguish). If we have a complete classical
knowledge of the quantum states instead a simply deriving an upper
bound of the error of the global classifier, a finer analysis will
reveal the exact training error of this classifier.
\end{rem}
A weighted variant of the reduction, called \emph{weighted
one-against-all}~\cite{weightedoneagainstall}, offers a better upper
bound in terms of error than the basic version. This variant
exploits the fact than false negatives (not detecting the true
class) are more damageable to the error of global classifier than
false positives (predicting the wrong class). In practice, this
means that a datapoint will have a higher weight during the
construction of the classifier of its class. The algorithm proceeds
by reducing the multiclass classification to weighted binary
classification and then use the costing
reduction~\cite{costingreduction} to reduce the weighted binary
classification to the standard binary classification. The main
advantage of the weighted version of this reduction is that it
offers a guarantee on the error bound of the global classifier of
$\frac{k}{2}\epsilon$, for $\epsilon$ the average error of the
binary classifiers generated, which is divided by two compared to
the basic version. In this case, the training cost of the weighted
version of the one-against-all reduction is $\Theta(kTt_{bin})$
where $k$ is the number of classes, $T$ the constant number of
classifiers generated by the costing reduction and $t_{bin}$ the
number of copies used by each call of the Helstrom oracle. The
classification cost will be $\Theta(kT)$. Quantumly, if we know the
classical description of the states of the training dataset
($D_n\in\mathsf{L}_{qu}^{cl}$), we can replace the costing reduction
via the reduction using the Helstrom oracle
(Reduction~\ref{first_weighted_red}) which results in a training
cost of $\Theta(kt_{bin})$ and a classification cost of $\Theta(k)$.

\subsection{Binary tree reductions}\label{section_binary_tree}

Another way of solving the multiclass version is to build a binary
tree where \emph{each node is a binary classifier which
discriminates between two subsets of classes} and where \emph{the
leaves are labeled after a specific class}. The root contains the
set of all classes and use a binary classifier to divide this set
into two subsets of classes of approximatively same size. To
classify an unknown state, we start from the root and we go down the
tree according to the output of the binary classifier observed at
each node until we reach a leaf, in which case we predict the class
associated to this leaf. There are several ways of building the
binary tree (for instance in a bottom-up or top-down fashion), which
might lead to a different global error of the final classifier. The
Algorithms~\ref{binary_tree_training}
and~\ref{binary_tree_classification} detail a possible way of
constructing recursively the binary tree from the root to the
leaves, and then use it for classification.

\begin{algorithm}
\caption{%{}\\
$\mathsf{binary\_tree\_training}$($D_n\in\mathsf{L}_{qu}^{\otimes
\Theta(t_{bin}\log k)}$)} \label{binary_tree_training}
\begin{algorithmic} \raggedright
\IF{all the states in $D_n$ belongs to the same class} \STATE
\textbf{Create} a leaf labeled according to this class \STATE
\textbf{Return} \ENDIF \STATE \textbf{Choose} at random two subsets
of classes $Y_a$ and $Y_b$ among $D_n$ such that $|Y_a|\approx
|Y_b|$ \STATE \textbf{Separate} the training dataset $D_n$ into two
subsets $D_a$ and $D_b$ according to the two subsets of classes
$Y_a$ and $Y_b$ (let $\rho_a$ be the density matrix representing the
subset $D_a$ and $\rho_b$ the density matrix representing the subset
$D_b$) \STATE \textbf{Call} the Helstrom oracle to learn the binary
classifier $f_{(\rho_a,\rho_b)}$ which distinguishes between the two
density matrices $\rho_a$ and $\rho_b$ \STATE \textbf{Create} a node
in the binary tree whose test corresponds to the binary classifier
$f_{(\rho_a,\rho_b)}$ \STATE \textbf{Call}
$\mathsf{binary\_tree\_training}$($D_a$) \STATE \textbf{Call}
$\mathsf{binary\_tree\_training}$($D_b$)
\end{algorithmic}
\end{algorithm}

\begin{algorithm}
\caption{$\mathsf{binary\_tree\_classification}$($\myket{\psi_?}^{\otimes
\Theta(\log k)}$, a classifier $f$ which is a binary classification
tree)} \label{binary_tree_classification}
\begin{algorithmic} \raggedright
\STATE \textbf{Start} the traversal of the tree at the root \WHILE{a
leaf is not reach} \STATE Use a copy of the state $\myket{\psi_?}$
in the binary classifier corresponding to the current node \IF{the
classifier predicts the negative class} \STATE Go down the tree on
the left \ELSE \STATE Go down the tree on the right\ENDIF \ENDWHILE
\STATE \textbf{Return} the class labeled at this leaf
\end{algorithmic}
\end{algorithm}

\begin{red}[Reduction of multiclass classification to standard
binary classification (via binary
tree)]\label{multiclass_binary_tree_red}
Given the access to an Helstrom oracle and a quantum training dataset $D_n\in\mathsf{L}_{qu}^{\Theta(t_{bin}\log k)}$, it is possible to \emph{reduce the multiclass classification task to the standard binary classification via a binary tree reduction}.\\
\textbf{Training cost}: $\Theta(t_{bin}\log k)$.\\
\textbf{Classification cost}: $\Theta(\log k)$.
\begin{proof}
During the construction of the binary tree, the Helstrom oracle is
called a number of times which is directly proportional to the
number of nodes in the tree. However, each call to the oracle splits
the dataset into two subsets (whose sum of sizes is equal to that of
the original training dataset), which implies that at each level of
the tree the number of copies of each quantum state used by the
different calls of the Helstrom oracle is $\Theta(t_{bin})$. The
global cost of the training is therefore $\Theta(t_{bin}\log k)$
because the depth of the tree is $\Theta(\log k)$ (for $k$ the
number of classes) as it is built to be balanced. The classification
cost is also directly proportional to the depth of the tree and is
$\Theta(\log k)$.

The global error of the final binary tree classifier is the sum of
probability for each class of having a error in the path going from
the root to the leaf of this class which is upper bounded by
$\epsilon \log k$, if for simplification we suppose that all the
classes are equiprobable and that all binary classifiers have the
same error $\epsilon$. Indeed in this case, an error can occur with
probability $\epsilon$ at each node traversed which implies that the
global error maybe $\epsilon \log k$ in the worst case.
\end{proof}
\end{red}

\begin{cor}[State identification]
Let $D_n$ be a quantum training dataset composed of $n$ pure states
such that there are not two identical states in $D_n$. A POVM exists
that can identify the index of an unknown quantum state
$\myket{\psi_?}$ chosen at random among the states of $D_n$ with a
non-trivial accuracy given $\Theta(\log n)$ copies of this state.
\begin{proof}
The proof is relatively direct, it simply involves setting $k=n$,
which means assigning a different class to each of the $n$ points of
the quantum dataset $D_n$, and applying the
Reduction~\ref{multiclass_binary_tree_red}.
\end{proof}
\end{cor}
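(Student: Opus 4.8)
The plan is to view index identification as a $k$-class classification problem with $k=n$ and then invoke Reduction~\ref{multiclass_binary_tree_red}. Since the corollary assumes that no two states of $D_n$ are identical, we may relabel the dataset as $D_n=\{(\myket{\psi_1},1),\ldots,(\myket{\psi_n},n)\}$, assigning to the $i^{\text{th}}$ state its own class $y_i=i$; under this relabelling, predicting the class of an unknown $\myket{\psi_?}$ drawn uniformly from $D_n$ is exactly the same as identifying its index. Applying the binary-tree reduction with the Helstrom oracle to this instance then yields a POVM-based classifier whose training cost is $\Theta(t_{bin}\log n)$ and whose classification cost is $\Theta(\log k)=\Theta(\log n)$ copies of $\myket{\psi_?}$, because the reduction builds a balanced tree of depth $\Theta(\log n)$ and spends one copy of the unknown state per node traversed from root to leaf.

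What remains is to show that this classifier attains \emph{non-trivial} accuracy, i.e.\ a success probability strictly larger than the $\frac{1}{n}$ achievable by blind guessing. Here I would not use the crude error bound $\epsilon\log n$ of Reduction~\ref{multiclass_binary_tree_red} (which is vacuous once $\epsilon$ is not small), but instead track the success probability exactly: the final classifier outputs the correct index precisely when every binary test along the root-to-leaf path of the true index answers correctly, so that probability equals $\prod_{v}(1-\epsilon_v)$, the product over the $\Theta(\log n)$ nodes $v$ on that path, where $\epsilon_v$ is the Helstrom error of the test at node $v$. If we can guarantee $\epsilon_v\le\frac{1}{2}-\gamma$ for some fixed constant $\gamma>0$ at every such node, the product is at least $\bigl(\tfrac{1}{2}+\gamma\bigr)^{\log_2 n}=n^{\,\log_2(1+2\gamma)-1}=n^{-\beta}$ with $\beta=1-\log_2(1+2\gamma)<1$, which for $n>1$ is strictly above $\frac{1}{n}$, as required.

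The hard part is therefore establishing the uniform gap $\epsilon_v\le\frac{1}{2}-\gamma$. This is not automatic, since a subset-mixture $\rho_a$ and its complement $\rho_b$ can coincide even when all the underlying pure states are distinct (for instance an equal mixture of $\myket{0},\myket{1}$ equals an equal mixture of $\myket{+},\myket{-}$), in which case the Helstrom error at that node is exactly $\frac{1}{2}$. I would address this in two ways: first, by exploiting the freedom in how the balanced split $Y_a,Y_b$ is chosen at each node --- with $n$ pairwise distinct pure states one can always pick a partition making the two mixtures differ, so $D(\rho_a,\rho_b)>0$ --- and second, by spending a constant number of extra copies of $\myket{\psi_?}$ at each node and taking a majority vote, using the amplification argument of Section~\ref{binary_classification} to drive each per-node error below a fixed $\frac{1}{2}-\gamma$; this inflates the classification cost only by a constant factor and so leaves it at $\Theta(\log n)$. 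The cleanest resolution would be to bound the per-node trace distances below in terms of the minimum pairwise distinguishability of the states of $D_n$, which would make $\gamma$, and hence the exponent $\beta$, explicit.
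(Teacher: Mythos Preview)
Your core approach---set $k=n$ so that each state is its own class, then invoke the binary-tree reduction---is exactly what the paper does; in fact the paper's entire proof consists of those two sentences and stops there.

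Where you differ is that you go considerably further than the paper in trying to justify the phrase ``non-trivial accuracy''. The paper leaves this entirely implicit: it does not track the success probability along a root-to-leaf path, does not observe that the bound $\epsilon\log n$ from Reduction~\ref{multiclass_binary_tree_red} can be vacuous, and does not raise the degeneracy you point out (that two complementary subset-mixtures of distinct pure states can coincide, as with $\{\myket{0},\myket{1}\}$ versus $\{\myket{+},\myket{-}\}$). Your product argument and the two fixes you propose---choosing the balanced split so that $\rho_a\neq\rho_b$, and amplifying each node by majority vote over a constant number of copies---are sound and make the corollary substantially more honest. The one residual caveat, which you yourself flag at the end, is that the per-node gap $\gamma$ (and hence the hidden constant in $\Theta(\log n)$) depends on the particular dataset $D_n$ rather than only on $n$; the paper's informal statement tolerates this, but your suggestion to bound the per-node trace distances via the minimum pairwise distinguishability is precisely what would be needed to make the dependence explicit.
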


If we are in the situation where we have a complete knowledge of the
states of the training set ($D_n\in\mathsf{L}_{qu}^{cl}$), it is
possible to choose the two subsets of classes such that they
maximize the trace distance between the two density matrices of
these subsets. In this case, it is possible to build the tree from
the root to the leaves by splitting the dataset into two subsets
which maximize the trace distance. Another way of growing the tree
is by starting from the leaves to the root, where at each level we
pair the classes that are the easiest to distinguish. In particular,
a reduction called ``\emph{filter tree}''\cite{filtertree} exists
which reduce the multiclass classification to the standard binary
classification (via weighted binary classification and the
\emph{costing} reduction~\cite{costingreduction}). This reduction
builds a multiclass classifier which has the form of a binary tree
by starting from the leaves and guarantee that the error of this
classifier is upper bounded by $\epsilon \log k$, for $k$ the number
of classes and $\epsilon$ the average error of the binary
classifiers generated. The strength of this reduction is that it
offers a similar guarantee for the regret (which was not the case of
the algorithm $\mathsf{binary\_tree\_training}$ presented
previously). The regret of the multiclass classifier will be at most
$r \log k$, for $r$ the average regret of the binary classifiers.

\subsection{Pretty good measurement}\label{section_pgm}

If we know the classical description of the states
\mbox{($D_n\in\mathsf{L}_{qu}^{cl}$)}, a general measurement
strategy exists, called the ``\emph{Pretty Good
Measurement}''\footnote{This measure is sometimes called
``\emph{square-root measurement}'' in the literature due to the
explicit form of this POVM.}~\cite{prettygoodmeasurement}, which
enables us to build a classifier, which given a single copy of an
unknown state $\myket{\psi_?}$, can predict the class of this state
with an error bounded by the square root of the error of the optimal
classifier.
\begin{thm}[Error rate of the Pretty Good Measurement~\cite{errorpgm}]
Given the classical description of $k$ density matrices
$\rho_1,\ldots,\rho_k$, it is possible to build a POVM, called the
\emph{Pretty Good Measurement}, whose error rate $\epsilon_{PGM}$ to
distinguish between these $k$ mixed states, given a single copy
$\rho_?$ of one of these states, is in the worst case quadratically
higher that the error $\epsilon_{opt}$ that would have the optimal
POVM. Formally, this means that:
\begin{equation}
\epsilon_{opt} \leq \epsilon_{PGM} \leq \sqrt{\epsilon_{opt}}
\end{equation}
\end{thm}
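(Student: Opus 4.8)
The plan is to establish the two inequalities separately. The lower bound $\epsilon_{opt} \le \epsilon_{PGM}$ is immediate and requires no work: by definition $\epsilon_{opt}$ is the error of the optimal POVM over all possible measurements, and since the Pretty Good Measurement is one particular (valid) POVM, its error cannot be smaller than the optimal one. So the entire content of the theorem lies in the upper bound $\epsilon_{PGM} \le \sqrt{\epsilon_{opt}}$.

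For the upper bound, I would write down the explicit form of the PGM. Given the ensemble with (prior-weighted) density operators $\tilde{\rho}_j = p_j \rho_j$ and total state $\rho = \sum_j \tilde{\rho}_j$, the PGM elements are $\Pi_j = \rho^{-1/2}\,\tilde{\rho}_j\,\rho^{-1/2}$ (on the support of $\rho$; this explains the alternative name ``square-root measurement''). First I would verify this is a genuine POVM: each $\Pi_j \succeq 0$ and $\sum_j \Pi_j = \rho^{-1/2}\big(\sum_j \tilde{\rho}_j\big)\rho^{-1/2} = \mathsf{I}$ on the support. Then the success probability is $P_{succ}^{PGM} = \sum_j \Tr(\Pi_j \tilde{\rho}_j) = \sum_j \Tr\big(\rho^{-1/2}\tilde{\rho}_j\rho^{-1/2}\tilde{\rho}_j\big)$, and $\epsilon_{PGM} = 1 - P_{succ}^{PGM}$. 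The key step is then to relate this quantity to the optimal success probability $P_{succ}^{opt} = 1 - \epsilon_{opt}$, which by Holevo's conditions is characterized as $\max_{\{E_j\}} \sum_j \Tr(E_j \tilde{\rho}_j)$ over all POVMs $\{E_j\}$.

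The main obstacle — and really the crux of the argument — is proving the quadratic relationship between the two error rates. The clean approach is to cite the known result (as the paper does, via \cite{errorpgm}, originally due to Barnum--Knill) rather than reprove it. If one wanted to sketch it: the standard technique bounds $P_{succ}^{PGM}$ below in terms of $P_{succ}^{opt}$ by comparing the PGM against the optimal measurement operator by operator, using an operator inequality of the form $\Pi_j \succeq$ something involving $E_j$ and the square-root structure, together with an application of Cauchy--Schwarz (or the concavity/operator-monotonicity of the square root) to absorb the cross terms. One shows $1 - P_{succ}^{PGM} \le$ (roughly) $\sqrt{1 - P_{succ}^{opt}}$ after collecting the error contributions, which rearranges to $\epsilon_{PGM} \le \sqrt{\epsilon_{opt}}$.

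Finally I would remark on what this buys us in the learning context: since the bound is multiplicative-in-square-root, whenever the classes are well separated ($\epsilon_{opt}$ small) the PGM is nearly optimal, and crucially it has a closed form computable directly from the classical descriptions of the $\rho_j$ — so in the learning class $\mathsf{L}_{qu}^{cl}$ it gives an explicit multiclass classifier using only a single copy of $\myket{\psi_?}$ at classification time, sidestepping the fact that the truly optimal $k$-class POVM has no known closed form for $k > 2$.
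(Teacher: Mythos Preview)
The paper does not actually provide a proof of this theorem: it is stated with a citation to Barnum--Knill~\cite{errorpgm} and immediately followed by the corollary on regret, with no \texttt{proof} environment in between. So there is nothing in the paper to compare your argument against; the author treats the result as a black box imported from the literature.

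Your proposal therefore goes well beyond what the paper does. The structure you outline is the right one and matches the Barnum--Knill argument: the lower bound is indeed trivial by optimality, the explicit square-root form $\Pi_j = \rho^{-1/2}\tilde\rho_j\rho^{-1/2}$ is correct (and your verification that it sums to the identity on the support is the standard check), and the upper bound is obtained by comparing the PGM success probability to that of an arbitrary POVM via an operator Cauchy--Schwarz / square-root inequality. One caution on precision: the Barnum--Knill bound in its original form is usually stated on the success probabilities, namely $P_{succ}^{PGM} \ge (P_{succ}^{opt})^2$, which translates to $\epsilon_{PGM} \le 2\epsilon_{opt} - \epsilon_{opt}^2$ rather than literally $\epsilon_{PGM} \le \sqrt{\epsilon_{opt}}$; the two are not equivalent, and the form quoted in the paper is a somewhat loose paraphrase. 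If you were to write out the details you would want to be careful about which inequality you are actually deriving. Your closing remark about the relevance to the $\mathsf{L}_{qu}^{cl}$ learning class is exactly the point the paper is making in this section.
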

\begin{cor}[Bound on the regret of the Pretty Good Measurement]
The regret of the Pretty Good Measurement is bounded by:
\begin{equation}
r_{PGM} \leq \sqrt{\epsilon_{opt}} - \epsilon_{opt}
\end{equation}
\end{cor}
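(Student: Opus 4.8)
The plan is to derive the corollary directly from the theorem on the error rate of the Pretty Good Measurement stated immediately above, together with the definition of regret given earlier in the paper. This is a one-line argument, so the ``proof'' is essentially a substitution followed by an application of the known bound, and the main (very mild) obstacle is simply to be careful about which quantity plays the role of $\epsilon_{opt}$.

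First I would recall that, by definition, the regret of any classifier $f$ is $r_f = \epsilon_f - \epsilon_{opt}$, where $\epsilon_{opt}$ is the smallest achievable error on the same classification problem. Here the classifier is the Pretty Good Measurement, so $\epsilon_f = \epsilon_{PGM}$ and $r_{PGM} = \epsilon_{PGM} - \epsilon_{opt}$. Next I would invoke the upper bound from the theorem, namely $\epsilon_{PGM} \leq \sqrt{\epsilon_{opt}}$, which holds because the optimal POVM attaining $\epsilon_{opt}$ is precisely the one whose error the PGM approximates up to a square root. Substituting this into the expression for the regret immediately yields $r_{PGM} = \epsilon_{PGM} - \epsilon_{opt} \leq \sqrt{\epsilon_{opt}} - \epsilon_{opt}$, which is exactly the claimed bound.

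The only thing worth double-checking is that the subtracted quantity $\epsilon_{opt}$ in the regret and the $\epsilon_{opt}$ appearing under the square root in the theorem refer to the same object: both are the error of the optimal POVM distinguishing the $k$ mixed states $\rho_1,\ldots,\rho_k$ given a single copy. Since they coincide, no adjustment is needed, and the bound is tight in the sense that it is obtained by plugging the worst-case value of $\epsilon_{PGM}$ into the regret. I would also note in passing that the right-hand side $\sqrt{\epsilon_{opt}} - \epsilon_{opt}$ is non-negative for $\epsilon_{opt} \in [0,1]$, consistent with the fact that a regret is always non-negative, and that it vanishes exactly when $\epsilon_{opt} \in \{0,1\}$, i.e.\ when the PGM is already optimal. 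This completes the proof of the corollary.
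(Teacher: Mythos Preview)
Your proposal is correct and is exactly the intended argument: the paper states this corollary without proof, treating it as immediate from the definition $r_{PGM}=\epsilon_{PGM}-\epsilon_{opt}$ combined with the upper bound $\epsilon_{PGM}\leq\sqrt{\epsilon_{opt}}$ from the preceding theorem, which is precisely the substitution you carry out.
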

Montanaro~\cite{discriminationrandomstates} proved that the error of
the Pretty Good Measurement is always smaller than that of the
prediction strategy that does not even measure the state, but rather
chooses one of the classes at random according to their \emph{a
priori} probabilities. He also derived an upper bound on the error
of the Pretty Good Measurement which depends on the fidelity between
each pair of states forming
the training dataset. This bound is:
\begin{equation}\label{upper_bound_pgm}
\epsilon_{PGM}\leq 1 -
\frac{1}{n}\sum_{i=1}^{n}\frac{1}{\sum_{j=1}^{n}\Fid(\myket{\psi_i},\myket{\psi_i})}
\end{equation}
\begin{defin}[Similarity matrix of a quantum training dataset]
A \emph{similarity matrix}\footnote{The similarity matrix is often
called \emph{Gram matrix} in the literature, especially in classical
ML.} $S_n$ of a quantum training dataset containing $n$ states is a
matrix of size $n$ by $n$, where each entry $S(i,j)$ of the matrix
(for $i,j\in\{1,\ldots,n\}$) contains an estimate of the fidelity
between the state $\myket{\psi_i}$ and the state $\myket{\psi_j}$.
\end{defin}

It follows directly from the symmetry property of the fidelity, that
the similarity matrix is a \emph{symmetric matrix}. An efficient
algorithm exists to compute this matrix, which requires only a
number of copies of each state that is linear in $n$, the number of
states in the quantum training dataset. The
Algorithm~\ref{compute_similarity_matrix} formalizes how to compute
the similarity matrix for a quantum dataset $D_n$.

\begin{algorithm}
\caption{%{}\\
$\mathsf{similarity\_matrix\_computation}$($D_n \in
\mathsf{L}_{qu}^{\otimes \Theta(en)}$)}
\label{compute_similarity_matrix}
\begin{algorithmic} \raggedright
\FOR{$i=1$ to $n$} \STATE $S(i,i)=1$ \ENDFOR
 \FOR{$i<j$}
  \STATE Estimate the fidelity between the two states $\myket{\psi_i}$
  and
  $\myket{\psi_j}$ by using the \textsf{C-Swap} test $e$ times
  \STATE Set the estimate of $\Fid(\myket{\psi_i},\myket{\psi_j})$ to be equal to \mbox{$1- \frac{2 \times
\#\myket{1}}{e}$} (where $\#\myket{1}$ represents the number of
times where the result $\myket{1}$ has been observed)
  \STATE Update $S(i,j)=S(j,i)=\Fid(\myket{\psi_i},\myket{\psi_j})$
\ENDFOR \STATE \textbf{Return} $S_n$ the computed similarity matrix
\end{algorithmic}
\end{algorithm}
\begin{thm}[Computation of the similarity matrix]
It is possible to compute the similarity matrix of a quantum dataset
$D_n$ with a precision $\epsilon$, for $\epsilon=\frac{1}{e}$, from
$\Theta(en)$ copies of each state.
\begin{proof}
For each pair of states ($\myket{\psi_i}$,$\myket{\psi_j}$) of the
training dataset $D_n$, the \textsf{Control-Swap} test allows us to
estimate the fidelity between these states with a precision
$\epsilon$, where $\epsilon=\frac{1}{e}$ for $e$ the number of
copies used during the test. As the matrix $S_n$ is symmetric, the
number of entries to estimate is
$\Theta(\frac{n(n-1)}{2})=\Theta(n^2)$. Therefore for each state
$\myket{\psi}$, we will need $\Theta(e)$ copies for each of the $n$
\textsf{Control-Swap} tests where this state appears, which makes a
global cost of $\Theta(en)$ copies per state.
\end{proof}
\end{thm}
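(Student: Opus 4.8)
The plan is to combine the statistical guarantee of the \textsf{Control-Swap} test with a simple accounting of how many copies of each state are consumed. First I would recall that, fed with one copy of $\myket{\psi_i}$, one copy of $\myket{\psi_j}$ and one ancilla qubit, the \textsf{Control-Swap} test outputs the result $\myket{1}$ with probability $p_{ij}=\tfrac12\bigl(1-\Fid(\myket{\psi_i},\myket{\psi_j})\bigr)$ and $\myket{0}$ otherwise, so that a single run is an unbiased $\{0,1\}$ sample whose mean determines the fidelity through $\Fid(\myket{\psi_i},\myket{\psi_j})=1-2p_{ij}$. Running the test $e$ times on fresh copies and forming the empirical frequency $\widehat{p}_{ij}=\#\myket{1}/e$ gives the estimator $1-2\widehat{p}_{ij}$ used in Algorithm~\ref{compute_similarity_matrix}; a standard Chernoff/Hoeffding bound then shows that with $\Theta(e)$ repetitions this estimator is within the target precision $\epsilon=\tfrac1e$ of the true fidelity (the failure probability being controlled by $e$), which is exactly the accuracy claimed for each off-diagonal entry $S(i,j)$.

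Next I would use the structure of $S_n$ to count the total cost. Since $\Fid(\myket{\psi_i},\myket{\psi_i})=1$ for a pure state, the diagonal entries are set to $1$ at no cost in copies, and since fidelity is symmetric ($S(i,j)=S(j,i)$) only the $\binom{n}{2}=\Theta(n^2)$ entries with $i<j$ require a \textsf{Control-Swap} test. Now fix an individual state $\myket{\psi}=\myket{\psi_i}$: it participates in exactly $n-1$ of these pairwise tests (one against each $\myket{\psi_j}$, $j\neq i$), and each such test consumes $e$ copies of $\myket{\psi_i}$, one per repetition. Hence $\Theta(en)$ copies of each state suffice, i.e.\ $D_n\in\mathsf{L}_{qu}^{\otimes\Theta(en)}$, and in this regime every off-diagonal entry of $S_n$ is known up to precision $\epsilon=\tfrac1e$, which is the assertion of the theorem.

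I expect the only genuinely delicate point to be the translation between ``number of repetitions'' and ``precision'': obtaining a uniform additive-$\epsilon$ guarantee simultaneously over all $\Theta(n^2)$ entries calls for a union bound, which in a worst-case reading inflates $e$ by a logarithmic factor (or, equivalently, slightly weakens the success probability). One should also fix the convention for ``fidelity''---$|\langle\psi_i|\psi_j\rangle|$ versus $|\langle\psi_i|\psi_j\rangle|^2$---since the \textsf{Control-Swap} test natively returns the squared overlap; this merely rescales the estimator and leaves the $\Theta(en)$ per-state copy complexity unchanged.
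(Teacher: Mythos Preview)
Your proposal is correct and follows essentially the same approach as the paper's proof: estimate each off-diagonal entry via $e$ independent \textsf{Control-Swap} tests, exploit symmetry so that only $\binom{n}{2}$ pairs need to be measured, and then observe that each individual state participates in $\Theta(n)$ such pairwise tests at $\Theta(e)$ copies apiece, for a per-state cost of $\Theta(en)$. Your additional remarks on the union bound and the fidelity convention are refinements the paper does not spell out, but they do not change the argument's structure.
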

\begin{cor}[Upper bound on the error of the Pretty Good Measurement]
Given $\Theta(n)$ copies of each state of a quantum training dataset
$D_n$, it is possible to compute an upper bound on the error of that
Pretty Good Measurement will make on $D_n$.
\begin{proof}
The proof is straightforward, we only need to apply the algorithm
$\mathsf{similarity\_matrix\_computation}$ and evaluate the
formula~\ref{upper_bound_pgm} by using the estimate the fidelity
between each pair of states from the corresponding entries of the
similarity matrix.
\end{proof}
\end{cor}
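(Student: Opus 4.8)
The plan is to combine Montanaro's bound (Inequality~\eqref{upper_bound_pgm}) with the already-established cost of computing the similarity matrix. First I would observe that the right-hand side of~\eqref{upper_bound_pgm} is a function \emph{only} of the pairwise fidelities $\Fid(\myket{\psi_i},\myket{\psi_j})$, i.e.\ of the entries of the similarity matrix $S_n$; it does not refer to the classical descriptions of the states nor to anything else about them. Consequently, once we are in possession of (estimates of) the entries of $S_n$, we obtain a numerical upper bound on $\epsilon_{PGM}$ simply by plugging them into the formula.

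Second, I would invoke Algorithm~\ref{compute_similarity_matrix} ($\mathsf{similarity\_matrix\_computation}$) together with the preceding theorem on computing the similarity matrix, which shows that every one of the $\Theta(n^2)$ off-diagonal entries of $S_n$ can be estimated to precision $\epsilon = 1/e$ using $\Theta(en)$ copies of each state (the diagonal entries are exactly $1$ and need no estimation). Taking $e$ to be a constant, this is $\Theta(n)$ copies per state, which is the budget claimed in the statement.

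Third --- and this is the step requiring the most care --- the \textsf{Control-Swap} test returns only \emph{estimates} $\widetilde{S}(i,j)$ of the fidelities, so a naive substitution into~\eqref{upper_bound_pgm} need not yield a rigorous upper bound. The key observation is that the right-hand side of~\eqref{upper_bound_pgm}, viewed as a function of the entries of $S_n$, is monotonically non-decreasing in each entry: raising a fidelity raises a row sum $\sum_j \Fid(\myket{\psi_i},\myket{\psi_j})$, lowers its reciprocal, and hence raises $1 - \frac{1}{n}\sum_i \frac{1}{\sum_j \Fid(\myket{\psi_i},\myket{\psi_j})}$. With $e$ copies, a Chernoff/Hoeffding estimate on the number of $\myket{1}$ outcomes gives $|\widetilde{S}(i,j) - \Fid(\myket{\psi_i},\myket{\psi_j})| \le \epsilon = 1/e$ except with probability exponentially small in $e$; therefore, replacing each off-diagonal entry by $\min(1,\widetilde{S}(i,j)+\epsilon)$ and evaluating~\eqref{upper_bound_pgm} produces, with high probability, a quantity at least as large as Montanaro's true bound and hence at least $\epsilon_{PGM}$.

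Finally, I expect the only genuine obstacle to be the bookkeeping of error propagation: there are $\Theta(n^2)$ entries estimated simultaneously, so a union bound is needed to make the ``with high probability'' uniform across all of them. Charging each pair a failure probability $\delta/n^2$ multiplies the per-pair copy count $e$ by an $O(\log n)$ factor, so a scrupulous statement guaranteeing (rather than merely expecting) a valid upper bound costs $\Theta(n\log n)$ copies per state; absorbing the logarithmic factor recovers the $\Theta(n)$ of the statement. No quantum-information ingredient beyond the assumed Inequality~\eqref{upper_bound_pgm} and the already-proven cost of computing $S_n$ is required.
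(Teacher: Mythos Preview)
Your proposal is correct and follows the same route as the paper: run $\mathsf{similarity\_matrix\_computation}$ to obtain the pairwise fidelities, then substitute them into formula~\eqref{upper_bound_pgm}. The paper's own proof stops there and simply declares the result ``straightforward''. Your write-up goes further in two respects that the paper omits entirely: (i) you justify why substituting \emph{estimates} still yields a valid upper bound, via the monotonicity of the right-hand side of~\eqref{upper_bound_pgm} in each fidelity entry together with the $+\epsilon$ padding; and (ii) you track the union bound over the $\Theta(n^2)$ estimated entries, noting the $O(\log n)$ overhead this would strictly incur. These are genuine improvements in rigor over the paper's argument rather than a different approach; the paper tacitly treats the estimated similarity matrix as if it were exact.
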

Montanaro also gave another upper bound on the error of the Pretty
Good Measurement which depends directly on the eigenvalues of the
similarity matrix $S_n$. Let $\lambda_i$, be the $i^{\text{th}}$
eigenvalue of the similarity matrix. The error of the Pretty Good
Measurement is bounded from above by:
\begin{equation}
\epsilon_{PGM} \leq 1 -
\frac{1}{n}\left(\sum_{i=1}^{n}\sqrt{\lambda_i}\right)^2
\end{equation}
This bound can also be explicitly computed from the similarity
matrix $S_n$ by diagonalizing it to extract the eigenvalues.

Regarding a lower bound of the Pretty Good Measurement, a recent
bound~\cite{lowerboundpgm}, also due to Montanaro, proved that this
error is bounded from below by:
\begin{equation}\label{lower_bound_pgm}
\epsilon_{PGM} \geq \sum_{i=1}^{n}\sum_{j=i}^{n}p_i
p_j\Fid(\myket{\psi_i},\myket{\psi_j}),
\end{equation}
where $p_i$ and $p_j$ are the \emph{a priori} probabilities of the
states $\myket{\psi_i}$ and $\myket{\psi_j}$. In the situation where
all these states are equiprobable, we can replace all the
probabilities by the $\frac{1}{n}$ in the
formula~\ref{lower_bound_pgm}. Here also the bound can be directly
estimated from the similarity matrix $S_n$.

Intuitively, this bounds seem to indicate that the fidelity between
pair of states is a sufficient measure to assess the difficulty of
distinguishing the states of the training dataset. This intuition is
wrong, indeed Jozsa and Schlienz~\cite{jozsaschlienzparadox} have
shown that there exist situations where the fidelity between pair of
states in the quantum dataset $D_n$ is low (which means it is easy
to discriminate one of state from the other), while at the same
time it is impossible to distinguish efficiently in a global manner
one state from all the other states.

To summarize, \emph{it is possible to bound the error that the
Pretty Good Measurement would realize even without explicitly
constructing it}. Indeed, we can bound the error of the Pretty Good
Measurement given a linear number of copies of each state of the
quantum dataset, whereas if we want to build explicitly the POVM
corresponding to this measure all the techniques currently known
seems to require to know a classical description of the states
(which requires an exponential number of copies in the number of
qubits if we use the tomography in the case of an unknown state). An
important avenue of research is whether or not it is possible to
design a learning algorithm that ``learns'' an approximate version
of the Pretty Good Measurement (in the same sense as the Helstrom
oracle) from a finite number of copies of each state of $D_n$.
\begin{conj}[Amount of information necessary to learn the Pretty
Good Measurement] The minimal number of copies $t_{PGM}$ of each
state of the quantum training dataset $D_n$ necessary to ``learn'' a
circuit that could implement a non-trivial approximation of the
Pretty Good Measurement is polynomial in $n$ the number of quantum
states in $D_n$ and $k$ the number of classes.
\end{conj}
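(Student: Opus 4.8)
The plan is to reduce the whole problem to estimating one $n\times n$ classical matrix — the Gram matrix of the (probability-weighted) training states, \emph{including phases} — and then to use the closed form of the Pretty Good Measurement for pure-state ensembles. Put $\myket{\phi_i}=\sqrt{p_i}\myket{\psi_i}$, where $p_i$ is the effective weight of $\myket{\psi_i}$ inside its class, let $\Phi$ be the $2^d\times n$ matrix whose columns are the $\myket{\phi_i}$, set $\rho=\Phi\Phi^\dagger$ and $G=\Phi^\dagger\Phi$. The PGM elements are $E_i=\myket{\mu_i}\!\mybra{\mu_i}$ with $\myket{\mu_i}=\rho^{-1/2}\myket{\phi_i}$, and a short computation gives $\myket{\mu_i}=\sum_{j}(G^{-1/2})_{ji}\myket{\phi_j}$ (pseudo-inverse on the support); the $k$-class PGM is just the $n$-outcome PGM with outcomes $i$ merged according to $y_i$. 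Hence, in the ``data frame'' $\{\myket{\phi_j}\}$, every PGM measurement vector is an explicit function of $G$ alone, so a classical description of $G$ — up to the harmless diagonal unitary gauge $G\mapsto DGD^{-1}$, which changes neither the $E_i$ nor the induced statistics — is a classical description of the PGM. It therefore suffices to estimate $G$ from a finite number of copies.

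First I would bound the copy cost of estimating $G$. The magnitudes $|\langle\psi_i|\psi_j\rangle|^2=\Fid(\myket{\psi_i},\myket{\psi_j})$ are exactly what Algorithm~\ref{compute_similarity_matrix} produces, to precision $\tfrac1e$ from $\Theta(en)$ copies of each state; taking $e$ polynomial in $n$ yields all pairwise magnitudes to inverse-polynomial precision with $\mathrm{poly}(n)$ copies per state. The missing phase data is carried by the triple (Bargmann) invariants $\langle\psi_i|\psi_j\rangle\langle\psi_j|\psi_l\rangle\langle\psi_l|\psi_i\rangle$, which are gauge-invariant and estimable from a constant number of copies of each of the three states per trial by a cyclic three-register variant of the \textsf{Control-Swap} test; with $\Theta(n^3)$ triples this is again $\mathrm{poly}(n)$ copies of each state. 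Magnitudes together with triple invariants pin down $G$ up to the diagonal gauge above (fix phases along a spanning tree; the triangle products then determine all the rest). Thus $t_{PGM}=\mathrm{poly}(n)$ copies of each state suffice, and since $k\le n$ this is in particular polynomial in $n$ and $k$, as conjectured.

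It remains to turn an $\tfrac{1}{\mathrm{poly}(n)}$-accurate estimate $\widehat G$ of $G$ into a measurement that is a ``non-trivial approximation'' of the PGM. When the least eigenvalue of $G$ is at least $\tfrac{1}{\mathrm{poly}(n)}$ (i.e.\ the states are not too nearly linearly dependent), $G\mapsto G^{\pm 1/2}$ is Lipschitz on the relevant range, so the POVM elements built from $\widehat G$ are inverse-polynomially close in operator norm to the true $E_i$ and the induced classification error differs from $\epsilon_{PGM}$ by $o(1)$ — a near-exact approximation. In the ill-conditioned regime one cannot reconstruct the PGM this way, and the clean move is to aim lower: build the square-root measurement of the \emph{estimated} ensemble and bound its error through Montanaro's fidelity-based estimates — the upper bound~\eqref{upper_bound_pgm} and the eigenvalue bound $1-\tfrac1n\big(\sum_i\sqrt{\lambda_i}\big)^2$ — which are stable functions of $\widehat G$; the resulting measurement still beats random guessing (as Montanaro showed the PGM always does), which is enough for ``non-trivial''.

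The main obstacle is exactly this conditioning gap: one must make ``non-trivial approximation'' precise enough that it remains a meaningful target when $G$ is near-singular, and then verify that the error of the measurement actually reconstructed (not the idealized PGM) is controlled by the stable, fidelity-dependent quantities above; that is where I expect the real work to lie. A separate point, which does \emph{not} affect $t_{PGM}$ but should be flagged, is that compiling the classical description of an $n$-outcome POVM on $\mathbb{C}^{2^d}$ into an explicit quantum circuit may require size exponential in $d$ in the worst case — this is the subject of the earlier open questions on efficient implementation, and is orthogonal to the copy complexity the conjecture concerns.
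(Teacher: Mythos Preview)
The statement you are addressing is a \emph{conjecture} in the paper, not a theorem; the paper offers no proof, sketch, or heuristic argument for it. There is therefore nothing in the paper to compare your proposal against --- any correct argument here would go strictly beyond what the paper does.

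As an attack on the conjecture, your outline is reasonable up to a point. The first half --- that the full Gram matrix $G$ (moduli via the \textsf{Control-Swap} test of Algorithm~\ref{compute_similarity_matrix}, relative phases via Bargmann triple products) can be estimated to inverse-polynomial accuracy from $\mathrm{poly}(n)$ copies per state, and that $\myket{\mu_i}=\sum_j (G^{-1/2})_{ji}\myket{\phi_j}$ --- is correct and standard.

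The gap is the inference from ``$G$ determines the PGM in the data frame $\{\myket{\phi_j}\}$'' to ``we can output a \emph{circuit} that implements (an approximation of) the PGM''. Knowing $G=\Phi^\dagger\Phi$ fixes the PGM vectors only as abstract linear combinations of the unknown $\myket{\phi_j}$, and hence fixes all on-dataset statistics $\mybra{\psi_j}E_i\myket{\psi_j}$; it does \emph{not} fix the $E_i$ as operators on $\mathbb{C}^{2^d}$, because it tells you nothing about the embedding $\Phi$ itself. A circuit is a computational-basis object, and two ensembles related by an unknown unitary on $\mathbb{C}^{2^d}$ share the same $G$ but require genuinely different circuits. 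You brush this aside as a gate-count issue ``orthogonal to the copy complexity'', but it is in fact an \emph{information} issue: with $G$ alone you do not possess a classical description of the POVM on $\mathbb{C}^{2^d}$ at all, so there is nothing to compile. To close this you would need either (i) to spend further copies on some form of tomography restricted to $\mathrm{span}\{\myket{\phi_j}\}$ --- whose cost then depends on $d$, not only on $n$ and $k$ --- or (ii) to argue that fresh copies of the training states can be consumed at classification time as program registers for a universal or template-matching measurement, which changes the model and needs its own copy accounting. Until one of these routes is made precise, the argument does not establish the conjecture as stated.
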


\section{Discussion and conclusion}\label{discussion_and_conclusion}

The following table summarizes the training/learning and the
classification cost of the different learning tasks and reductions
that we have seen in this paper. The \emph{binary classification is
the main learning primitive} as the weighted and the multiclass
classification can be reduced to it via the Helstrom oracle.

\begin{table}[h]
\begin{tabular}{|l|l|l|}
\hline
Learning task & Training cost & Classification cost \\
%\multicolumn{2}{|c|}{Team sheet} \\
\hline
Binary classification & $\Theta(t_{bin})$ & $\Theta(1)$ \\
\hline
Weighted binary classification & & \\
(reduction via Helstrom oracle) & $\Theta(t_{bin})$ & $\Theta(1)$\\
(costing reduction) & $\Theta(Tt_{bin})$ & $\Theta(T)$\\
\hline
Multiclass classification &  & \\
(state identification via \textsf{Control-SWAP} test) & $\Theta(1)$& $\Theta(n)$\\
(one-against-all reduction) & $\Theta(kt_{bin})$& $\Theta(k)$\\
(binary tree reduction) & $\Theta(t_{bin}\log k)$ & $\Theta(\log k)$\\
%(réduction du type ECOC) & &\\
%(réduction du type une-contre-deux) & &\\
(Pretty Good Measurement) & unknown & $\Theta(1)$\\
(Bound on the error of the PGM) & $\Theta(n)$ & not applicable\\
\hline
\end{tabular}\caption{Table summarizing the training and classification costs of the different quantum learning tasks and reductions seen in this paper.}\label{tab_recap_couts}
\end{table}

In practice, the Helstrom oracle will be implemented by a learning
algorithm, which from a finite number of copies of each state from
the training dataset, outputs a POVM $f$ which can act as a binary
classifier. Contrary to the Helstrom oracle, this algorithm does not
need to be optimal in terms of classification error as long as it
offers a non-trivial precision which is better than simply guessing
randomly the class of the unknown quantum state. Even in this case,
almost all the reductions presented in this paper will work although
the global error of the generated classifier will likely be higher
due to the non-optimality of the constructed POVM. Designing a
learning algorithm as the Helstrom oracle will enable us to estimate
the minimum number of copies $t_{bin}$ of each state of the training
dataset that is necessary to perform the binary classification.

The essence of ML is to learn from data coming from past experience
with the hope of generalizing on new situations in the future. In
this paper, we concentrate on the accurate classification of states
coming from the training dataset $D_n$ but we did not discussed how
this approach could generalize on quantum states unobserved
previously. A natural way of defining that a POVM $f$ acting as a
classifier generalize is if this POVM can recognize the class of a
state that is close to one of the state of the training dataset but
without being identical. The closeness between two pure states can
be defined using the fidelity or other distance measures such as the
\emph{Euclidean distance}.
\begin{defin}[Euclidean distance between pure states~\cite{lemmaBernsteinVazirani}]
The \emph{Euclidean distance} between two pure states
$\myket{\psi}=\sum_{i=1}^{d}\alpha_i\myket{i}$ and
$\myket{\phi}=\sum_{i=1}^{d}\beta_i\myket{i}$ is defined as
$Dist_{L2}(\myket{\psi},\myket{\phi})=\sqrt{\sum_{i=1}^d|\alpha_i -
\beta_i|^2}$.
\end{defin}
Bernstein and Vazirani~\cite{lemmaBernsteinVazirani} have proven
that if two pure states $\myket{\psi}$ and $\myket{\phi}$ of same
dimension are within $\epsilon$ Euclidean distance of each other,
the same measure performed on the two states generates samples from
two distributions which have a total variational distance of at most
$4\epsilon$. Therefore, if two states are close in terms of their
Euclidean distance this give a good indication that a POVM $f$
acting as a classifier will with high probability predicts the same
class for these two states. Future work in this model of doing
machine learning on quantum information include the formalization of
the notion of testing and generalization error, as well as the study
of different models of classical and quantum noise (see for instance
the section 8.3 of~\cite{nielsen:book00} for different forms of
quantum noise) and how they affect the robustness of the quantum
learning algorithms.

ML is a field where it is important to valide experimentally the
performance of a learning algorithm and to compare it to other
existing algorithms. Classically, numerous repositories of datasets
are publicly available such as the repository of the University of
California at Irvine\footnote{\url{http://archive.ics.uci.edu/ml/}}
(\emph{UCI repository}) or the \emph{MNIST database} for the
recognition of
characters\footnote{\url{http://yann.lecun.com/exdb/mnist/}}.
Quantumly, once several learning algorithms have been proposed, it
is also important to test them experimentally on quantum datasets
representing realistic situations that experimentalists are likely
to encounter in their laboratories. The main idea would not be to
create physically these datasets but rather to give access to their
classical descriptions to the community so that anyone who want to
use and experiment with them using their favorite classical
simulator can do it freely. An example of two possible classes could
be for instance entangled state versus separable states. Moreover,
several situations that people encounter in quantum information
processing can be recast naturally as a classification problem, such
as for instance the scenario in quantum cryptography where the
eavesdropper try to maximize his probability of guessing correctly
the class of the state that he has intercepted.

\section*{Acknowledgments}

I would like to thanks Gilles Brassard for enlightening discussions
on the subject and Fr\'{e}d\'{e}ric Dupuis for proofreading an early
version of this paper and his suggestions and comments.

\end{document}